\newtheorem{theorem}{Theorem}
\newtheorem{definition}{Definition} 
\newtheorem{lemma}[theorem]{Lemma}
\newcommand{\floor}[1]{\lfloor #1 \rfloor}
\newcommand{\clusterone}{\textsc{Cluster1}\xspace}
\newcommand{\clustertwo}{\textsc{Cluster2}\xspace}
\newcommand{\clusterthree}{\textsc{Cluster3}\xspace}
\newcommand{\FullOrShort}{full}
	  \newcommand{\fullOnly}[1]{#1}
	  \newcommand{\shortOnly}[1]{}
	  \newcommand{\fullOnly}[1]{}
	  \newcommand{\shortOnly}[1]{#1}
\begin{document}

\title{Optimal Gossip with Direct Addressing\\[0.1cm] \normalsize {Regular Submission}}

\author{Bernhard Haeupler \footnote{Address: Microsoft Research, 1065 La Avenida, Mountain View, CA 94043, USA, Tl: +1-650-693-0214}\\Microsoft Research\\ \texttt{haeupler@alum.mit.edu} \and Dahlia Malkhi\footnote{Address: Microsoft Research, 1065 La Avenida, Mountain View, CA 94043, USA, Tl: +1-650-693-1362}\\Microsoft Research\\ \texttt{dalia@microsoft.com}}

\date{}

\maketitle

\shortOnly{
\thispagestyle{empty}
\setcounter{page}{0}
}

\begin{abstract}

Gossip algorithms spread information in distributed networks by having nodes repeatedly forward information to a few random contacts. By their very nature, gossip algorithms tend to be \emph{distributed} and \emph{fault tolerant}. If done right, they can also be \emph{fast} and \emph{message-efficient}. A common model for gossip communication is the random phone call model, in which in each synchronous round each node can PUSH or PULL information to or from a random other node. For example, Karp et al. [FOCS 2000] gave algorithms in this model that spread a message to all nodes in $\Theta(\log n)$ rounds while sending only $O(\log \log n)$ messages per node on average. They also showed that at least $\Theta(\log n)$ rounds are necessary in this model and that algorithms achieving this round-complexity need to send $\omega(1)$ messages per node on average.

\smallskip

Recently, Avin and Els\"asser [DISC 2013], studied the random phone call model with the natural and commonly used assumption of \emph{direct addressing}. Direct addressing allows nodes to directly contact nodes whose ID (e.g., IP address) was learned before. They show that in this setting, one can ``break the $\log n$ barrier'' and achieve a gossip algorithm running in $O(\sqrt{\log n})$ rounds, albeit while using $O(\sqrt{\log n})$ messages per node.

\smallskip
In this paper we study the same model and give a simple gossip algorithm which spreads a message in only $O(\log \log n)$ rounds. We furthermore prove a matching $\Omega(\log \log n)$ lower bound which shows that this running time is best possible. In particular we show that any gossip algorithm takes with high probability at least $0.99 \log \log n$ rounds to terminate. Lastly, our algorithm can be tweaked to send only $O(1)$ messages per node on average with only $O(\log n)$ bits per message. Our algorithm therefore simultaneously achieves the optimal round-, message-, and bit-complexity for this setting. As all prior gossip algorithms, our algorithm is also robust against failures. In particular, if in the beginning an oblivious adversary fails any $F$ nodes our algorithm still, with high probability, informs all but $o(F)$ surviving nodes.
\end{abstract}

\newpage

\section{Introduction}\label{sec:intro}

Gossip algorithms, also called rumor spreading protocols, spread information in a network by having nodes repeatedly forward (limited amounts of) information to a few, often randomly selected, contacts. They have found countless applications in solving coordination and information dissemination tasks in large distributed systems or networks. Their advantages in this setting over alternative approaches is manifold: They are naturally distributed and due to their randomized nature often very \emph{fault tolerant} with respect to many kinds of failures, such as, permanent or transient link-failures or node crashes. If done right, they can also be very efficient in their round-, message-, and bit-complexity.

\smallskip

Two models for gossip communication that have been considered and intensely studied in the literature are the \emph{random phone call model} and \emph{direct addressing}.

\smallskip

The \emph{random phone call model} assumes a complete network and synchronous rounds in which each node is able to PUSH or PULL information to or from a random other node. Both variants spread a message in $O(\log n)$ rounds with $O(\log n)$ messages sent per node \cite{pittel1987spreading} if every node communicates in every round. While the logarithmic running time is best possible for this setting, Karp et al. showed in an influential paper \cite{KarpSSV2000}, that the message-complexity could be reduced to $O(\log \log n)$ message per node on average. They also proved that any address-oblivious algorithm with logarithmic running time needs to send at least $\Omega(\log \log n)$ messages per node on average. Here \emph{address-oblivious} means that the decision of if and what a node sends is independent of the ID of its communication partner(s) this round (see also \cite{AvinElDISC13}). Lastly, they demonstrated the \emph{fault-tolerance} of their gossip algorithm. In particular, they remarked that any oblivious $F$ node failures still lead to all but $O(F)$ nodes being informed in the end, with high probability. Since then the random phone call model has been generalized in many directions, most notably to non-complete network topologies. For example, in \cite{GiakkoupisS2012vertex,Giakkoupis2011,NC} arbitrary topologies are assumed and the stopping time of the uniform PUSH-PULL gossip is studied where in each round each node initializes a communication with a uniformly random neighbor.

\smallskip

\emph{Direct addressing} is another natural assumption that has been intensely studied. Direct addressing assumes that nodes have \emph{unique IDs} (e.g., IP addresses) and that in each synchronous round a node can contact any node (or neighbor) whose ID (or IP address) is known to it. Since nodes do not know the topology and are only able to initiate one communication at a time, many algorithms for this setting let nodes contact a random node known to them. For example, in the famous Name-Dropper algorithm of Harchol-Balter et al. \cite{harchol1999resource}, nodes repeatedly forward all IDs they know to a random node known to them. They show that this simple algorithm terminates in $O(\log^2 n)$ rounds (using $O(\log^2 n)$ messages per node) if started from any (weakly) connected initial topology. Recently, \cite{kutten2003deterministic} gave an improved deterministic algorithm which achieves $O(\log n)$ rounds for the same setting. A similar setting but restricting a message to only contain one contact was considered in \cite{gossipdiscovery}. Lastly, direct addressing has also been extensively studied in general topologies. For example in \cite{CensorHillelShachnai2011,STOCversion,Haeupler2013} it is assumed that all possible communications are given by an arbitrary unknown topology in which nodes only know their neighbors.

\smallskip

Recently Avin and Els\"asser \cite{AvinElDISC13} showed that adding the natural direct addressing assumption in the random phone call model allows for gossip algorithms that ``break the $\log n$ barrier''. Their algorithm achieves a running time of $O(\sqrt{\log n})$ however at the cost of having a larger message-complexity of $\Theta(\sqrt{\log n})$ messages per node and also a larger bit-complexity of $O(n \log^{3/2} n + n b \log \log n)$ (compared to the $O(n b \log \log n)$ bit-complexity of \cite{KarpSSV2000}). Their algorithm is fault-tolerant in the same way as discussed for \cite{KarpSSV2000} above:

\smallskip

\begin{theorem}[Theorem 1 of \cite{AvinElDISC13}]\mbox{}\\
In the random phone call model with direct addressing there is a randomized fault-tolerant gossip algorithm that spreads a message of $b = \Omega(\log n)$ bits in $O(\sqrt{\log n})$ rounds using $O(\sqrt{\log n})$ messages per node and a total bit-complexity of $O(n \log^{3/2} n + n b \log \log n)$ bits  with high probability.
\end{theorem}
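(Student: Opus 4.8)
The plan is to exhibit a concrete algorithm and analyze it, exploiting direct addressing to let \emph{clusters} of nodes act as high-fan-out super-nodes. The guiding identity is a square-root trade-off: if we group the $n$ nodes into clusters of size $m$, then it takes roughly $\log m$ rounds to \emph{build} a cluster (by repeated doubling) and roughly $\log_m(n/m)$ rounds to \emph{gossip among} the $n/m$ clusters, since an informed cluster can reach $\approx m$ new clusters per round. The total, $\log m + (\log n)/\log m$ up to lower-order terms, is minimized by AM--GM at $\log m = \sqrt{\log n}$, i.e.\ $m = 2^{\sqrt{\log n}}$, giving $O(\sqrt{\log n})$ rounds overall. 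I would therefore organize the algorithm into two phases of $O(\sqrt{\log n})$ rounds each.

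In Phase~1 (cluster construction) I would form star-clusters of size $\Theta(m)$ with $m = 2^{\sqrt{\log n}}$. Each node becomes a \emph{leader} independently with probability $\approx 1/m$, and every leader's ID is propagated by PUSH for $\sqrt{\log n}$ rounds; since pushing a single token at least doubles the set of nodes holding it each round, each leader is learned by $\approx m$ nodes, which attach to the first leader they hear and thereby form a star. Direct addressing is what makes a star useful: once any cluster member holds the rumor, the leader can broadcast it to all $\Theta(m)$ members in $O(1)$ rounds. I would prove by standard Chernoff and branching-process concentration that w.h.p.\ every node joins some cluster and that cluster sizes are $\Theta(m)$. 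Each node sends $O(\sqrt{\log n})$ messages of $O(\log n)$ bits (a single ID) in this phase, giving the $O(n \log^{3/2} n)$ term of the bit-complexity.

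In Phase~2 (inter-cluster spreading) the rumor starts in the source's cluster, and I would spread, with full cluster fan-out, a cheap $O(\log n)$-bit \emph{pointer} (the leader ID of an already-informed cluster): in one round the $\Theta(m)$ members of an informed cluster each PUSH this pointer to a uniformly random node, hitting $\approx m$ distinct clusters. Hence the number of informed clusters multiplies by $\approx m$ per round and reaches all $n/m$ of them in $\log_m(n/m) = O(\sqrt{\log n})$ rounds. The actual $b$-bit payload is \emph{not} blasted out this way; instead a cluster that has learned a pointer PULLs the rumor once, using direct addressing, from the cluster it points to. This separation is exactly what simultaneously meets the round bound (driven by the cheap, high-fan-out pointer spread) and the $O(n b \log\log n)$ bit bound (driven by the few, $O(\log\log n)$-per-node, payload transfers). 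Here I would again use concentration to show the $\times m$ growth is not spoiled by collisions or by uneven cluster sizes across the $O(\sqrt{\log n})$ dependent rounds.

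The fault-tolerance claim I would obtain by the same argument as Karp et al.: because cluster membership and all PUSH targets are chosen by the algorithm's internal randomness, independent of the oblivious adversary's choice of the $F$ failed nodes, w.h.p.\ failures are spread uniformly, a constant fraction of every cluster survives, and the surviving cluster-overlay still percolates from the source; only clusters effectively severed from the source are lost, leaving all but $O(F)$ informed survivors. I expect the main obstacle to be precisely the high-probability control of the two growth processes \emph{together with} the three complexity bounds: ensuring that the per-round multiplicative factor stays $\Theta(m)$ despite collisions and size fluctuations, and reconciling the $\sqrt{\log n}$-round requirement (which wants large fan-out) with the $n b \log\log n$-bit requirement (which wants few payload transmissions) through the pointer/pull separation. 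Carrying the concentration bounds through $\Theta(\sqrt{\log n})$ rounds of dependent randomness is the delicate technical heart of the argument.
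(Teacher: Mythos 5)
First, a point of order: the paper does not prove this statement at all --- it is Theorem~1 of Avin and Els\"asser, quoted verbatim from \cite{AvinElDISC13} as prior work. So there is no in-paper proof to compare against; the fairest comparison is to the cluster machinery the paper itself builds (\clusterone, \clustertwo), whose lemmas illuminate exactly where your sketch breaks. Your high-level skeleton is the right one --- form groups of size $m = 2^{\sqrt{\log n}}$, then gossip at the group level with fan-out $m$, balancing $\log m + (\log n)/\log m$ by AM--GM --- and this is indeed the structure behind the $O(\sqrt{\log n})$ bound. But two of your ``w.h.p.'' claims in Phase~1 are genuinely false as stated. (a) PUSH alone cannot cluster \emph{every} node in $O(\sqrt{\log n})$ rounds: once a constant fraction of nodes is clustered, an unclustered node is missed by all pushes in a round with constant probability, so the coverage tail of pure PUSH is $\Theta(\log n)$ rounds. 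This is precisely why the paper's own \Cref{lem:initialclusters} claims only that $0.9n$ nodes are clustered after the push phase, and why a separate PULL phase (\Cref{lem:pull}, costing $\Theta(\log\log n)$ rounds and, for the payload, $\Theta(n\log\log n)$ rumor transmissions) is needed --- which, incidentally, is the natural source of the $O(nb\log\log n)$ bit term that your accounting otherwise does not explain. (b) The ``attach to the first leader you hear'' process does \emph{not} make all cluster sizes $\Theta(m)$ w.h.p. A leader stalls in a round whenever its pushes land on already-clustered nodes; the probability a cluster remains a singleton through all $\log m$ rounds is roughly $\prod_{t<\log m} \min\{1, 2^t/m\} \approx 2^{-(\log^2 m)/2} = n^{-1/2}$, and since there are $n/m = n \cdot 2^{-\sqrt{\log n}} \gg n^{1/2}$ clusters, w.h.p.\ some clusters finish at size $O(1)$ and many at size $m^{1-\Omega(1)}$. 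Tiny clusters break both your per-round $\times m$ growth accounting in Phase~2 and the endgame claim that every uninformed cluster is hit w.h.p.\ (a size-$O(1)$ cluster is hit only with constant probability per round). This is exactly why the paper's algorithms carry explicit size-control primitives (ClusterDissolve, ClusterResize) and growth-monitoring rather than trusting per-cluster concentration of a competitive branching process; your appeal to ``standard Chernoff and branching-process concentration'' papers over the real difficulty you yourself flagged.

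Both gaps are repairable within the round budget --- dissolve undersized and split oversized clusters after Phase~1, and append an $O(\log\log n)$-round PULL phase for stragglers, since $\log\log n = o(\sqrt{\log n})$ --- but as written the proof does not go through. A minor further discrepancy: the fault-tolerance guarantee in the statement (as this paper formalizes it in \Cref{sec:faults}) is that all but $o(F)$ \emph{surviving} nodes are informed, whereas you only argue all but $O(F)$; closing that factor again requires the quadratically-converging PULL tail, not just percolation of the surviving overlay.
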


\subsection{Our Results}

In this paper we study the same setting and improve upon the state of the art by providing a drastically faster algorithm with simultaneously improved message- and bit-complexity:

\begin{theorem}\label{thm:main}\mbox{}\\
In the random phone call model with direct addressing there is a randomized fault-tolerant gossip algorithm that spreads a message of $b = \Omega(\log n)$ bits in $O(\log \log n)$ rounds using $O(1)$ messages per node on average and a total bit-complexity of $O(n b)$ bits with high probability.
\end{theorem}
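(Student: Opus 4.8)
The plan is to split the execution into two conceptual phases and to argue that each takes $O(\log\log n)$ rounds: a \emph{growth phase} that brings the number of informed nodes from $1$ up to a constant fraction of $n$, and a \emph{completion phase} that informs the remaining nodes. The completion phase is the part I am most confident about and it is essentially classical: once a constant fraction of the nodes is informed, let every still-uninformed node issue a PULL to a uniformly random node in each round. If $u_t$ denotes the fraction of uninformed nodes at the start of a round, a fixed uninformed node stays uninformed only if its random partner is also uninformed, so in expectation $u_{t+1}\approx u_t^2$. This quadratic decay of the complement drives $u_t$ below $1/n$ after only $O(\log\log n)$ rounds. I would make this rigorous by a round-by-round Chernoff/Azuma concentration argument showing $u_{t+1}\le (1+o(1))u_t^2$ with high probability while the number of uninformed nodes is still polynomially large, and handle the final $O(1)$-per-bin regime by a direct union bound. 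Since only currently-uninformed nodes PULL and their number also drops doubly-exponentially, the total number of PULLs telescopes to $O(n)$, i.e.\ $O(1)$ messages per node on average.

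The growth phase is the technical heart and the place where direct addressing is essential. The classical obstruction is that with a single random call per round the informed population grows only by a constant factor (PUSH and PULL each inform at most $\Theta(s)$ new nodes when $s$ are informed), forcing $\Theta(\log n)$ rounds; this is exactly the barrier address-oblivious algorithms cannot cross. The plan is to use direct addressing to make the informed population \emph{square} each round, $s_{t+1}=\Theta(s_t^2)$, which yields $s_t=2^{\Theta(2^t)}$ and hence reaches $\Theta(\sqrt n)$ after $O(\log\log n)$ rounds. Concretely, I would maintain as an invariant that the informed nodes collectively hold the addresses of $\Theta(s_t^2)$ distinct not-yet-informed nodes, so that a single round in which informed nodes directly address these \emph{known} targets (rather than random ones) informs $\Theta(s_t^2)$ new nodes at once. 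The crucial point — and the reason this is not address-oblivious — is that the pool of known addresses must itself grow quadratically: I would sustain it by having nodes aggregate, via direct addressing, the addresses already discovered by the nodes they know, so that reach composes multiplicatively instead of being throttled by the one-random-call-per-round limit. A clean induction together with a Chernoff bound showing that collisions among the $\Theta(s_t^2)$ seed-and-direct contacts are negligible while $s_t=o(\sqrt n)$ should establish $s_{t+1}=\Theta(s_t^2)$ with high probability. Once $s_t$ reaches $\Theta(\sqrt n)$, the collectively known pool of $\Theta(s_t^2)=\Theta(n)$ addresses already covers a constant fraction of all nodes, so one or two further direct-addressing rounds push the informed set above a constant fraction of $n$, at which point the completion phase takes over.

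I expect the main obstacle to be meeting the message- and bit-budget simultaneously with this quadratic growth, rather than the round count per se. Squaring the reach naively would have nodes exchange their entire address lists, costing $\omega(nb)$ bits; the fix I would pursue is to let each rumor message carry only $O(1)$ auxiliary addresses and to rely on the doubly-exponential schedule so that all relevant sums telescope: in round $t$ only $\Theta(s_t)$ nodes are active and the work per round is $\Theta(s_{t+1})=\Theta(s_t^2)$, whence $\sum_t \Theta(s_t^2)=\Theta(n)$ total messages and, since $b=\Omega(\log n)$ absorbs the $O(\log n)$-bit addresses, $O(nb)$ total bits. Getting the accounting to close — in particular guaranteeing that the $\Theta(s_t^2)$ known targets are genuinely distinct and uninformed so that no contact is wasted, while never transmitting more than $O(1)$ addresses per message — is the delicate step and is where I would spend most of the effort.

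Finally, I would assemble the high-probability guarantee by a union bound over the $O(\log\log n)$ rounds of both phases, and obtain fault tolerance exactly as in prior work: because the growth is driven by many independent random contacts and the completion phase is the robust PULL process, failing any $F$ nodes chosen obliviously at the start removes only a vanishing fraction of each frontier, so that all but $o(F)$ surviving nodes are still informed with high probability.
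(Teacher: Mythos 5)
Your completion phase is sound and matches the paper's endgame: once a constant fraction of nodes is informed, repeated random PULLs square the uninformed fraction each round, giving $O(\log\log n)$ rounds and a telescoping $O(n)$ message count. The genuine gap is in your growth phase. The step ``a single round in which informed nodes directly address these known targets informs $\Theta(s_t^2)$ new nodes at once'' is impossible in this model: each node \emph{initiates} at most one communication per round, so the $s_t$ informed nodes can issue at most $s_t$ PUSHes, informing at most $s_t$ new nodes. The $\Theta(s_t^2)$ known-but-uninformed targets cannot be made to PULL, because they are uninformed and have no idea whom to contact, and a random PULL hits the informed set with probability only $s_t/n$. The model's asymmetry (unbounded $\Delta$) lets a node \emph{answer} many requests, not \emph{issue} many. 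What genuinely can square per round is \emph{knowledge} --- a node can pull the entire address list of a node it knows, composing 2-hop reach --- and this is precisely what the paper's lower-bound argument (\Cref{lem:knowledgegraph}) formalizes; but recruiting or informing a fresh node always costs one initiation by some already-recruited node, so any process that grows a single informed component outward from the rumor source gains at most a constant factor per round and is stuck at $\Omega(\log n)$ rounds. Your invariant conflates these two quantities, and no bookkeeping of address pools repairs it.

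The paper escapes this bind by decoupling structure-building from rumor-spreading. In \clustertwo it seeds $\Theta(n/\log^4 n)$ singleton clusters \emph{everywhere} (not at the source), grows each one to polylogarithmic size by ordinary constant-factor PUSH gossip in $O(\log\log n)$ rounds so that $\Theta(n/\log n)$ nodes are clustered, and then squares the \emph{cluster size} --- not the informed set --- by merging existing clusters: an active cluster PUSHes with all $\Theta(s)$ of its members in parallel, hits $\Theta(s/\log n)$ inactive clusters, and each hit inactive cluster of size $s$ merges wholesale because every one of its members spends its own single initiation pulling the new leader's ID from its old leader. The per-round initiation budget thus scales with the $\Theta(s^2)$ \emph{already-recruited} nodes participating in the merge, which is what makes the growth $s \mapsto \Theta(s^2/\log n)$ feasible without violating the one-initiation rule. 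After $O(\log\log n)$ squarings, a global merge, and a bounded PUSH that expands coverage to $\Theta(n)$ nodes (also keeping messages at $O(n)$ before your PULL endgame), the rumor --- which has sat untouched at its source the whole time --- is injected in two rounds via a ClusterShare through the final cluster. Without some such many-seeds/merging ingredient that harnesses the initiations of already-recruited nodes, your growth phase cannot beat $\Theta(\log n)$.
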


In addition to the faster running time, it is easy to see that the $O(1)$ message-complexity per node and the total $O(nb)$ bit-complexity are optimal. In contrast to \cite{AvinElDISC13}, which performed worse than \cite{KarpSSV2000} on these complexity measures, our algorithm actually improves on the bounds of \cite{KarpSSV2000} and even breaks both message-complexity lower bounds given in \cite{KarpSSV2000} for the random phone call model without direct addressing. Lastly, we also prove a matching $\Omega(\log \log n)$ lower bound on the running time of any gossip algorithm in the random phone call model with direct addressing. This shows that our gossip algorithm achieves the optimal round-, message-, and bit-complexity simultaneously.

Our lower bound is particularly strong in the following sense: Instead of showing that any algorithm that takes $c \log \log n$ rounds for a sufficiently small constant $c$ fails with non-negligible probability we prove that any algorithm that takes $0.99 \log \log n$ rounds or less fails with high probability. Our lower bound furthermore holds irrespectively of many model assumptions and even applies to non-gossip algorithms which can contact more than one known node per round:

\begin{theorem}\label{thm:LB}
In the random phone call model with direct addressing any (randomized) gossip algorithm fails to spread a message to all nodes with high probability if less than $0.99 \log \log n$ rounds are used. This remains true if nodes are allowed to use arbitrarily large messages, communicate non address-obliviously and even contact an unbounded number of known nodes per round instead of just one. 
\end{theorem}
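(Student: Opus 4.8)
The plan is to show that after $t = 0.99 \log \log n$ rounds the set of informed nodes still has size $n^{o(1)} \ll n$, so that with high probability some node --- indeed almost every node --- remains uninformed. Since the statement must survive in the strongest possible model, I will assume throughout that in each round every node may both issue one uniformly random call \emph{and} contact every node whose ID it currently knows, that each call transfers the full internal state (address book together with the rumor, if present), and that the algorithm is fully adaptive and non-oblivious. This only strengthens the adversary, so a lower bound in this model implies the one claimed.

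The central object is a potential that provably at most \emph{squares} each round. For a node $v$ write $N_t(v)$ for the set of IDs it knows after round $t$, let $K_t = \max_v |N_t(v)|$ be the largest address book, and let $\mathcal{K}_t = \bigcup_{v \text{ informed}} N_t(v)$ be the set of IDs known to the informed community. Two observations drive the argument. First, the informed set is contained in $\mathcal{K}_t$ (every informed node knows its own ID), so it suffices to bound $|\mathcal{K}_t|$. Second, address books square: in one round a node learns at most the union of the address books of the $\le K_t$ nodes it already knows together with that of its single random contact, whence $K_{t+1} \le K_t^2 + K_t \le 2K_t^2$; starting from $K_0 = 1$ this gives the deterministic bound $K_t \le 2^{2^t}$.

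First I would establish the analogous recursion for $\mathcal{K}_t$. New IDs can enter $\mathcal{K}_{t+1}$ in only two ways: through direct addressing, where the community aggregates the address books of the $\le |\mathcal{K}_t|$ nodes it already knows (contributing at most $|\mathcal{K}_t| \cdot K_t$ IDs), or through fresh random calls, where each relevant call reveals at most one new node and its address book. The second contribution is exactly where randomness and the high-probability claim enter: I would argue, via a Chernoff/union bound over the at most $n$ random calls of a round, that the number of freshly discovered nodes is $O(|\mathcal{K}_t| K_t)$ with high probability rather than taking its worst-case value. Taking logarithms yields a recursion of the form $\log |\mathcal{K}_{t+1}| \le 2 \log|\mathcal{K}_t| + O(2^t)$, whose solution is $|\mathcal{K}_t| \le 2^{O(t\,2^t)}$. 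Evaluating at $t = 0.99\log\log n$ gives $t\,2^t = o(\log n)$, hence $|\mathcal{K}_t| = 2^{o(\log n)} = n^{o(1)}$ and strictly fewer than $n$ informed nodes; the slack between $0.99$ and $1$ is precisely what absorbs the multiplicative constant together with the lower-order $t$ factor in the exponent, which is why a constant strictly below $1$ is needed.

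The step I expect to be the main obstacle is controlling the nodes informed by \emph{pulling} the rumor --- an uninformed node $x$ that already knows some informed node $v$ and contacts it. A priori the number of such $x$ is governed by how widely the IDs of informed nodes have spread, which $\mathcal{K}_t$ does not capture directly, and which one must resist bounding by mere weak connectivity in the knowledge graph (that relaxation acquires a linear-size component in $O(1)$ rounds and is hopelessly lossy). The fix is to observe that the spread of any single ID is itself a rumor-spreading process subject to the same squaring bound, so the total ``in-reach'' of the informed community is again $2^{O(2^t)}$; making this bound hold simultaneously over all informed IDs --- ideally by folding ``being informed'' and ``knowing an informed node'' into one symmetric potential that still squares --- is the technical heart of the proof, and together with the concentration step for the random calls is what forces the high-probability rather than deterministic phrasing.
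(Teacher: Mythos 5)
There is a genuine gap, and it sits at the cornerstone of your argument: the deterministic recursion $K_{t+1} \le K_t^2 + K_t$ (hence $K_t \le 2^{2^t}$) is false in this model, including in the strengthened version you yourself set up. Your justification only counts contacts that a node \emph{initiates} (its $\le K_t$ known nodes plus one random call), but a node can also be \emph{contacted} by arbitrarily many others in a round, and under your assumption each incoming call delivers the caller's full state. The number of such incoming callers is the node's in-degree in the knowledge structure, which is not controlled by $K_t$ (the maximum \emph{out}-knowledge): already after round~1, balls-into-bins gives some node $\Theta(\log n/\log\log n)$ random callers with high probability, each revealing at least its own ID, so $K_1 = \Theta(\log n/\log\log n)$ rather than $\le 2$. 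This is the same ``in-reach'' phenomenon you flag for pulls at the end, but it infects the address-book potential itself, not just the accounting of who gets informed; relatedly, your plan to apply per-round Chernoff bounds against a fully adaptive, non-oblivious algorithm is delicate, since the algorithm may adapt to the very randomness you are trying to concentrate over.

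The fix you sketch in your final sentences --- folding ``knowing'' and ``being known'' into one symmetric potential that still squares --- is precisely the paper's proof, and it does close the gap. The paper pre-samples all random choices, lets $G_t$ be the \emph{undirected} graph of round-$t$ samples, and shows (\Cref{lem:knowledgegraph}) that the undirected knowledge graph satisfies $K_{t+1} \subseteq (K_t \cup G_{t+1})^2$, hence $K_T \subseteq \bigl(\bigcup_{i \le T} G_i\bigr)^{2^T}$; incoming pushes are automatically covered because any node pushing to $v$ is an undirected neighbor of $v$ in $K_t \cup G_{t+1}$. The only probabilistic step is then a single global one --- $\bigcup_{i\le T} G_i$ has maximum degree $O(\log n)$ with probability $1 - n^{-\omega(1)}$ --- which sidesteps all adaptivity issues and forces diameter $\Omega(\log n/\log\log n)$, exceeding $2^T$ for any $T \le \log\log n - \log\log\log n - \omega(1)$; since a successful broadcast would make the knowledge graph complete within $O(1)$ extra rounds (attach the source's ID to the rumor), the lower bound follows. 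Note that your final numerology survives this correction: the per-round base worsens from $2^{2^t}$ to $(O(\log n))^{2^t}$, i.e., the log of the reach becomes $O(2^t \log\log n)$ instead of $O(t\,2^t)$, which is still $o(\log n)$ at $t = 0.99\log\log n$ --- so the slack below $1$ absorbs the loss exactly as you predicted, but the symmetric-potential step is not an optional refinement; without it the proof does not stand.
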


As a last contribution we consider bounds on the number of communications done by a node per round. While in a gossip algorithm each node initializes only one communication per round, nodes might have to respond to multiple requests. In fact, the number of simultaneous requests $\Delta$ can be as large as $n-1$. While this asymmetry is crucially used in essentially all algorithms that use direct addressing, it is unfortunately a questionable assumption for at least some systems and applications (see \Cref{sec:boundedcommunication} for a more detailed discussion). In this regard, we discuss the easy to prove fact that a sub-logarithmic running time inherently requires $\Delta = \omega(1)$ and show that the best running time possible for any given $\Delta$ is $\log n / \log \Delta$. We then give a general framework for working efficiently in a setting with bounded $\Delta$. In particular, we show how our gossip algorithm can efficiently compute a $\Delta$-clustering which can then be used to achieve any point on this trade-off curve between running time and round complexity (subject only to our $\Omega(\log \log n)$ lower bound from \Cref{thm:LB}). This shows, for example, that one can still have  gossip algorithms with an optimal $\Theta(\log \log n)$ running-time while keeping $\Delta$ smaller than any polynomial in $n$, that is, $\Delta = n^{o(1)}$. We believe this to be a useful setting of parameters which seems realistic in many settings.

\begin{theorem}\label{thm:deltaclustering}
For any $\Delta = \log^{\omega(1)} n$ there is a randomized gossip algorithm that computes a $\Delta$-clustering in $O(\log \log n)$ rounds using $O(1)$-messages per node on average but having no node respond to more than $\Delta$ requests per round. Given such a $\Delta$-clustering a message broadcast can then be performed easily with an optimal round-, message-, and bit-complexity.
\end{theorem}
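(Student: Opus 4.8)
The plan is to split the claim into its two halves: first reduce a broadcast to a given $\Delta$-clustering, which is the easy direction, and then show how to \emph{compute} such a clustering within the stated resource bounds, which is where the real work lies. Throughout I will take a \emph{$\Delta$-clustering} to be a partition of the $n$ nodes into $\Theta(n/\Delta)$ clusters, each of size $\Theta(\Delta)$, together with a designated \emph{leader} per cluster such that every member knows its leader's ID and the leader knows the IDs of all its members. This is exactly the structure that lets a cluster act as a single ``super-node'' of fan-out $\Delta$ while keeping every node's response load at most $\Delta$.

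Given such a clustering, I would perform the broadcast in two interleaved layers. \emph{Intra-cluster:} as soon as any member of a cluster learns the message, it forwards it to its leader (one direct-addressed push), and the leader disseminates it to its $\le \Delta$ members in $O(1)$ rounds; since a leader serves at most $\Delta$ members, its response count never exceeds $\Delta$. \emph{Inter-cluster:} in each round every member of an already-informed cluster pushes the message to a uniformly random node. An informed cluster thus fires $\Theta(\Delta)$ independent random pushes, a constant fraction of which land in as-yet-uninformed clusters (or saturate all remaining clusters when $\Delta$ is large), so the number of informed clusters grows by a factor $\Omega(\Delta)$ per $O(1)$ rounds. Reaching all $\Theta(n/\Delta)$ clusters therefore takes $O(\log(n/\Delta)/\log\Delta)=O(\log n/\log\Delta)$ rounds, which together with the $\Omega(\log\log n)$ floor from \Cref{thm:LB} is exactly the optimal $\Theta(\log n/\log\Delta + \log\log n)$. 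For the message- and response-accounting, each node pushes essentially once (when its cluster is first informed), giving $O(1)$ messages per node and $O(nb)$ bits; and since the total number of random pushes in any round is $O(\Delta)\cdot(\#\text{informed clusters}) = O(n)$ spread uniformly over $n$ targets, no node receives more than $O(\log n/\log\log n)\ll\Delta$ pushes whp, so the $\Delta$ response bound is respected.

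For the clustering itself I would elect leaders by letting each node independently declare itself a leader with probability $c/\Delta$ for a small constant $c<1$, yielding $\Theta(n/\Delta)$ leaders with aggregate capacity $\Theta(n/c)>n$. I then spread the leaders' IDs using the engine of \Cref{thm:main} seeded simultaneously at all leaders, where the ``message'' a node carries is the ID of a leader it may join, and a node joins the first leader whose ID reaches it. The one modification is a hard \emph{cap}: each leader accepts at most $\Theta(\Delta)$ members and, once full, stops responding; a node that tries a full leader simply continues to listen and joins the next one offered. Because the seeded-spread grows its reached set double-exponentially, after $O(\log\log n)$ rounds every node has been offered some leader; and capping responses at $\Theta(\Delta)$ can only \emph{decrease} a node's response load, so the $\Delta$ bound holds throughout the construction. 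The resulting clusters have every member knowing its leader by construction, and a final $O(1)$-round step in which each member reports to its (now fixed) leader supplies the reverse knowledge, completing the bidirectional structure.

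The main obstacle is proving that the two competing requirements—keeping every response bounded by $\Delta$ while still covering \emph{all} $n$ nodes with \emph{balanced} clusters of size $\Theta(\Delta)$—can be met simultaneously, since the caps that enforce the response bound also threaten to leave nodes orphaned if too many leaders fill up early. This is precisely where the hypothesis $\Delta=\log^{\omega(1)}n$ (in particular $\Delta=\omega(\log n)$) is essential: treating the assignment of nodes to leaders as a balls-into-bins process with $n$ balls and $\Theta(n/\Delta)$ bins of mean load $\Delta$, a Chernoff bound gives $\Pr[\text{load}>2\Delta]\le e^{-\Omega(\Delta)}=n^{-\omega(1)}$, and a union bound over the $\Theta(n/\Delta)$ leaders shows that whp every cluster has size $\Theta(\Delta)$, no leader overflows beyond its cap, and the slack capacity guarantees no orphaned nodes. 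I expect the delicate point to be coupling the order in which leader IDs arrive at a node with this idealized uniform assignment tightly enough that the concentration argument applies; handling this—e.g.\ by arguing that the first leader reaching each node is near-uniform, or by an auxiliary random re-hashing step before committing—will be the crux of the formal analysis.
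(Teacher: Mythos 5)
Your first half---broadcasting on top of a given $\Delta$-clustering---is essentially the paper's ClusterPUSH-PULL protocol (\Cref{lem:clusterPushPull}) and is sound up to one bookkeeping slip: you let \emph{every} member of \emph{every} informed cluster push in \emph{every} round, which costs $\Theta\bigl(n\cdot\frac{\log n}{\log \Delta}\bigr)$ messages, i.e.\ $\omega(1)$ per node whenever $\Delta = n^{o(1)}$; your own accounting (``each node pushes essentially once'') silently assumes the paper's fix, namely that only \emph{newly} informed clusters push. With that change the $\Omega(\Delta)$ growth factor per round is unaffected and your round-, message-, and load-bounds for this half go through.

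The construction half has a genuine gap. You invoke the engine of \Cref{thm:main} as a black box to disseminate leader IDs, but that engine (\clustertwo) obtains its $O(\log\log n)$ round complexity precisely by building coordination clusters whose size far exceeds $\Delta$: during \textsc{SquareClusters} clusters grow to size $\Theta(\sqrt{n}/\log^2 n)$, and after \textsc{MergeAllClusters} a single leader answers $\Theta(n/\log n)$ pull requests in one round. For any $\Delta = n^{o(1)}$ this violates the requirement that no node respond to more than $\Delta$ requests per round \emph{during the construction itself}, and your cap applies only to the \emph{elected} $\Delta$-leaders, not to the engine's internal leaders, so the assertion that ``capping responses at $\Theta(\Delta)$ can only decrease a node's response load'' does not address the actual overload. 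A second problem: \clustertwo does not spread its rumor hop by hop, so ``seeded simultaneously at all leaders; a node joins the first leader whose ID reaches it'' is not meaningful in that engine---the message is delivered only by the final ClusterShare through the one giant cluster---and redesigning the spread is exactly the work the theorem demands. The paper's \clusterthree (\Cref{alg:cluster3}) does this redesign: it stops \textsc{SquareClusters} at size $s \geq \sqrt{\Delta \log n}/C''$, performs a single randomized merge step (activation probability $10\frac{s}{\Delta/C''}$, with inactive clusters merging to a \emph{uniformly random} pushing cluster so that active clusters are loaded evenly) to reach size $\Theta(\Delta/C'')$, then recruits unclustered nodes while interleaving a ClusterResize($\Delta/C''$) into every push round so that no leader ever has more than $\Delta$ followers, and finishes with \textsc{UnclusteredNodesPull} and a final ClusterResize to obtain the factor-two size balance the paper's definition requires (sizes in $[\Delta, 2\Delta]$, which is stronger than your $\Theta(\Delta)$). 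Note that this mechanical enforcement of balance---resizing plus merging to uniformly random recruiters---also dissolves the coupling problem you correctly flag as the ``crux'' of your analysis: the paper never needs a balls-into-bins argument about the order in which leader IDs arrive.
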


\paragraph{Organization}

The rest of the paper is organized as follows: 
 In \Cref{sec:clustering} we introduce \emph{clusterings}, which are the main abstraction used to design our gossip algorithms. In \Cref{sec:simplealg} we give the simplest version of our algorithm which demonstrates the main ideas needed to achieve the $\Theta(\log \log n)$ round complexity. In \Cref{sec:linear} we then explain how to reduce the number of messages sent to $O(1)$ per node on average and achieve an optimal bit-complexity which proves our main theorem. In \Cref{sec:LB} we prove \Cref{thm:LB}, i.e., our $\Omega(\log \log n)$ round lower bound. In \Cref{sec:boundedcommunication} we provide our results on gossip algorithms with a bounded amount of communications per node. Lastly, in \Cref{sec:faults} we discuss the fault-tolerance of our algorithms.

\section{The Random Phone Call Model and Direct Addressing}\label{sec:model}

In this section we give the formal definitions for the random phone call model and the direct addressing assumption used throughout this paper.

\paragraph{The Network}

The random phone call model assumes a complete network consisting of $n$ nodes. Each node has a unique \emph{address} or \emph{ID} consisting of $O(\log n)$ bits, that is, we assume a polynomially large ID space. Initially nodes know their own ID but not the ID of any other node in the system. For convenience we assume that nodes know the number of nodes $n$. This assumption is without loss of generality since for all problems considered in this paper it is easy to test with high probability whether the algorithm succeeded. This allows for determining the parameter $n$ using the classical guess-test-and-double strategy without increasing the running times by more than a constant factor.

\paragraph{Communication}

Communication takes place in synchronous rounds. In each round a node can choose to initialize one communication. In particular, it can either choose to contact a random node or use direct addressing to contact a node whose ID it knows. Each such contact from a node $u$ to a node $v$ consists of either node $u$ PUSHing a message to $v$ or of node $u$ PULLing a message from $v$. A message can in principle contain any number of bits but all\footnote{The only exceptions are when we share a $b$-bit message (via a ClusterShare) and for resizing clusters in the Algorithm \clusterone, which demonstrates the ideas for achieving the optimal $\Theta(\log \log n)$ round complexity while ignoring message counts and message sizes for simplicity.} algorithms throughout this paper only use messages of (minimal) size containing $O(\log n)$ bits. In particular, every message carries either the information to be broadcast, a node count, or $O(1)$ node IDs. Following \cite{AvinElDISC13} (and \cite{KarpSSV2000}) we say a gossip algorithm is \emph{address-oblivious} if the decision whether to send or not and if so what to send (or respond with) is independent of the communication partner(s) in this round. In particular this means that for every node $u$ the messages PUSHed by $u$ does not depend on the ID of the communication partner and all responses $u$ gives to a PULL are the same in one round. All algorithms presented in this paper are address-oblivious. 

\paragraph{Round-, Message-, and Bit-Complexity}

The important figures of efficiency for such a gossip algorithm are the round-complexity, that is, the number of synchronous rounds needed, the message-complexity, that is, the number of messages sent per node on average throughout the execution, and the bit-complexity, that is, the total number of bits communicated in all messages throughput the execution. We only consider and present randomized protocols which complete their task with high probability in the round-, bit-, and message-complexity stated, that is, with probability at least $1 - \frac{1}{n^{C}}$ for some large chosen constant $C > 1$.

\paragraph{The (Fault-Tolerant) Broadcast Task}

Gossip algorithms can be used for a multitude of different communication and coordination tasks. The algorithms in this paper compute a cluster containing all nodes or, in the case of \Cref{thm:deltaclustering}, a $\Delta$-clustering which can then be used to perform any of these tasks easily and efficiently. To be consistent with prior work we focus on one specific task, namely, broadcasting. In this task there is a message, often also called rumor, consisting of $b$ bits with $b = \Omega(\log n)$ which is initially known to one node (or multiple nodes). The goal is to inform all nodes in the network. In \Cref{sec:faults} we also address the setting of node-failures. In this setting an oblivious adversary fails any $F$ nodes at the beginning of the execution. Follow \cite{KarpSSV2000} and \cite{AvinElDISC13}, the desired guarantee in this setting is that all but $o(F)$ surviving nodes are clustered/informed. We call an algorithm achieving this \emph{fault-tolerant}.

%
%

\section{Clusterings}\label{sec:clustering}

In this section we introduce clusterings, the main tool used in this paper to achieve efficient gossip.

First, in \Cref{sec:clusterdef}, we define the concept and implementation of a clustering. In \Cref{sec:primitives} we then explain how these clustering support efficient implementations of several primitives allowing for coordination and communication between clusters. These primitives are extensively used throughout this paper to implement our fast and message efficient gossip communication algorithms.

\subsection{Definition and Implementation of Clusterings}\label{sec:clusterdef}

A \emph{clustering} is a partition of all nodes $v \in V$ in a network into disjoint \emph{clusters} $C_1,C_2,\ldots,C_k$ and the set $U = V \setminus \bigcup_i C_i$ of \emph{unclustered nodes}. Each cluster $C$ has one \emph{cluster leader} $l_C \in C$ known to all nodes in $C$. We call all clustered nodes that are not leaders \emph{cluster followers}. We define the \emph{ID of a cluster} to be the ID of its leader and the \emph{size of a cluster} $C$ to be the number of nodes $|C|$ it contains.

Clusterings are implemented by each node $v \in V$ containing a variable $follow_v$. For any clustered node $v \in C$ this variable contains the ID of the cluster leader $ID(l_c)$ while for all unclustered nodes $v' \in U$ it is set to $\infty$. This also allows nodes to determine whether they are a cluster leader of follower by simply comparing their own ID to their $follow$ variable.

\subsection{Cluster Coordination Primitives}\label{sec:primitives}

Having a central leader which is known to all nodes in a cluster allows a cluster to coordinate using only a constant number of communication rounds. In one communication round, followers may PUSH information to the leader and thus have the leader collect inputs from the entire cluster. In a second round, followers may PULL a response from the leader. Below, we omit the details concerning PUSHing and PULLing messages, and briefly refer to exchanges such as above as followers sending information (if any) to the leader and pulling one directive from it.

Following is a list of cluster-macros that make use of the power of clusters, and will be used later in our constructions. All these cluster primitives require only a constant number of rounds and therefore also a constant number of messages per node. 

\begin{itemize}

\itemsep0em

	\item ClusterActivate($p$)\\	 This primitive allows clusters to be activated independently with probability $p$ by having followers pull the outcome of a $p$-biased coin flipped by their leader. Each cluster stays activated until the next call of ClusterActivate or until it is explicitly deactivated.
	
	\item ClusterSize\\ Each cluster can determine its size in two rounds by each cluster follower sending its ID to the cluster leader, the cluster leader counting the received IDs and all followers pulling this count from the leader.

	\item ClusterDissolve($s$)\\ This primitive ensures that all clusters are at least of the given size $s$ and runs in two rounds: In the first round each cluster follower pushes their ID to the leader. The cluster leader uses these messages to determine the cluster size $s'$. In the second round followers pull a value indicating their new clustering status from their current leader and set their $follow$ variable to this value. If $s' \geq s$ the cluster leader responds with its own ID and otherwise responds with $\infty$ to dissolve the cluster. It also sets its own $follow$ variable the same way. 

	\item ClusterResize($s$)\\ This primitive ensures that all clusters are of size at most $2s$. It runs in two rounds. In the first round each cluster follower pushes their ID to the leader. This allows the leader of any cluster to determine the size $s'$ of its cluster. In the second round followers pull a message from their current leader and update their $follow$ variable as follows: The cluster leader (in his mind) re-clusters its followers into $\floor{\frac{s'}{s}}$ clusters of equal size (up to one) declaring for each new cluster the node with the largest ID its leader. To announce\footnote{The messages used in this primitive are of size $O(\frac{s'}{s} \log n)$ bits. This calls for ensuring that $\frac{s'}{s} = \Theta(1)$ if a small bit complexity is desired. We note that this complication arises only for address-oblivious algorithms as a non address-oblivious algorithm could simply respond to each follower separately with the ID of its new leader.} and establish this clustering the leader responds to each node's pull request with the IDs of the $\floor{\frac{s'}{s}}$ new leaders. Each follower then sets its $follow$ variable to the largest leader ID that is at least as large as its own ID. It is easy to see that after a cluster resizing step all clusters have size at most $2s -1$.	
	
\item ClusterPUSH($msg$)/ClusterPULL($msg$)\\ This is a similar primitive to a random PUSH or PULL of a node except on a cluster level. Each cluster leader decides on whether a push or pull is to be performed and what message $msg$ to send or with what message $msg$ to respond if pulled. This decision is then shared within each cluster by followers pulling it from the leader. All nodes in clusters instructed to do now contact a random node and either relay the message of the cluster leader via a PUSH or request a message via a PULL. All messages received by a node through either a PULL or PUSH get then relayed to their cluster leader (if there is one).

\item ClusterMerge(new leader ID)\\ This primitive allows to merge clusters in one round. For this all followers simply request the ID of the new leader from their current leader to update their $follow$ variable. In addition to answering these pulls the leader also sets its own $follow$ variable to the new leader.

	\item ClusterShare(.)\\	 This primitive allows clusters to share information within a cluster in two rounds. In the first round, followers who have relevant information send this information to the leader, who aggregates the information. In a second round, followers pull the aggregated information from the leader.

\end{itemize}

\section{An $O(\log\log n)$ Round Gossip Algorithm}\label{sec:simplealg}

In this section we present the simplest version of our algorithm to demonstrate the principal ideas needed achieve the optimal $O(\log \log n)$ round-complexity. The resulting algorithm \clusterone is, for sake of simplicity, not tuned to have a good message- or bit-complexity. In particular, a constant fraction of the nodes is sending during most of the $\Theta(\log \log n)$ rounds and the message sizes occurring in the first ClusterResize call is hard to analyze.

\subsection{Overview}\label{sec:clusteroneOverview}

The idea of \clusterone~is simple, and a general outline can be inferred from the procedure calls and comments at the top of \Cref{alg:cluster1}: To obtain an initial clustering, Procedure {\sc GrowInitialClusters} samples a small logarithmic fraction of the nodes to active singleton-clusters. Then a standard PUSH gossip is performed for $\Theta(\log \log n)$ rounds in which these clusters recruit unclustered nodes. The usual exponential growth behavior of the PUSH gossip protocol leads to clusters of size at least $s = C' \log n$ while also ensuring that all but a small constant fraction of the nodes are recruited into some cluster. In the main phase of the algorithm, namely Procedure {\sc SquareClusters}, we repeatedly square this cluster size by merging clusters in the following manner: First, we use ClusterDissolve($s$) and ClusterResize($s$) to ensure that clusters have sizes between $s$ and $2s$. Then we determine the new cluster leaders by activating each cluster with probability $1/s$. The inactive clusters are then prompted to join one of the new active clusters via two ClusterPUSHes. This results in each new cluster recruiting $\Theta(s)$ inactive clusters which leads to the desired new cluster size of $\Theta(s^2)$. In particular, the first such PUSH guarantees that each new cluster grows by a factor of $\Theta(s)$, while the second iteration ensures that all old inactive clusters are merged into some new, larger active cluster. The doubly exponential growth of squaring the cluster size with every iteration guarantees that $\Theta(\log \log n)$ iterations are sufficient to obtain clusters of size $\frac{\sqrt{n}}{\log n}$. Procedure {\sc MergeAllClusters} uses two ClusterPUSHes to merge all clusters into the cluster with the smallest ID. With one cluster comprising most of the nodes in the network, Procedure {\sc UnclusteredNodesPull} uses a standard PULL gossip to have the remaining unclustered nodes join the cluster which is known to require only $\Theta(\log \log n)$ rounds. Lastly, with all nodes in one cluster spreading any desired message can be done via one ClusterShare.

A breakdown of the various steps is given in the detailed procedures of \Cref{alg:cluster1}, using cluster communication primitives from \Cref{sec:primitives}. 

\begin{algorithm}[htb!]
\caption{\clusterone}
\begin{algorithmic}[1]
\State \Call{GrowInitialClusters}{} \Comment{cluster 90\% of all nodes into clusters of size at least $C'\log n$}
\State \Call{SquareClusters}{} \Comment{repeatedly merge clusters to get to size $\frac{\sqrt{n}}{\log n}$}
\State \Call{MergeAllClusters}{} \Comment{merge all clusters to one}
\State \Call{UnclusteredNodesPull}{} \Comment{unclustered nodes pull to join the cluster}
\State ClusterShare(message)

\Statex
\Procedure{GrowInitialClusters}{}
\State With probability $1 / C \log n$ set $follow$ to $ownID$ otherwise to $\infty$
\For{$\Theta(\log\log n)$ times}
	\State ClusterPUSH(clusterID)
	\State unclustered nodes: set $follow$ to any received ID (if any)
  %
\EndFor{}
\EndProcedure

\Statex 

\Procedure{SquareClusters}{}
\State $s = C'\log n$
\State ClusterDissolve($s$)
\Repeat 
  \State ClusterResize($s$)   \label{cl1:clusterresize}
  \State ClusterActivate($1/s$)
  \For{two repetitions} \Comment{active clusters recruit all inactive clusters}
    \State active clusters: ClusterPUSH($follow$)
    \State inactive clusters: ClusterMerge(smallest received ID (if any))
  \EndFor{}
 $s \gets \Theta(s^2)$
\Until{$s > \frac{\sqrt{n}}{\log n}$}
\EndProcedure
	
\Statex
\Procedure{MergeAllClusters}{} 
  \For{two repetitions} \Comment{clusters with smallest ID recruits all other clusters}
    \State ClusterPUSH($follow$)
    \State ClusterMerge(smallest received ID)
  \EndFor{}
\EndProcedure

\Statex
\Procedure{UnclusteredNodesPull}{} 
\State repeat $\Theta(\log\log n)$ times: unclustered nodes PULL $follow$ value
\EndProcedure

\end{algorithmic}
\label{alg:cluster1}
\end{algorithm}

\subsection{Correctness Proof}

In this section we prove the correctness and efficiency of \clusterone, which is described in \Cref{sec:clusteroneOverview} and \Cref{alg:cluster1}. 
We break the proof down to several lemmas, which then lead to the main result. \shortOnly{Due to space constraints, we defer proof details to \Cref{app:proofscluster1}; the statements of the lemmas give the gist of the proof rationale.}



\begin{lemma}\label{lem:initialclusters}
At the end of Procedure {\sc GrowInitialClusters} in \clusterone, at least $0.9n$ nodes are contained in clusters of size $C\rq{} \log n$ with high probability.
\end{lemma}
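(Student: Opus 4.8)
The plan is to avoid tracking the sizes of individual clusters and instead combine two much coarser quantities: an upper bound on the \emph{number} of clusters and a lower bound on the \emph{total} number of clustered nodes. The key structural observation is that throughout {\sc GrowInitialClusters} the set of cluster leaders never changes: a node becomes a leader only in the initial sampling step, and inside the loop only \emph{unclustered} nodes ever modify their $follow$ variable. Hence clusters never split, merge, or disappear; they only grow by recruiting previously unclustered nodes, and the number of clusters stays fixed at its initial value $m$. Consequently every clustered node lies in exactly one of these $m$ clusters, so the number of nodes sitting in ``small'' clusters (those of size below $C'\log n$) is at most $m \cdot C'\log n$.

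First I would control $m$. Since each node independently becomes a leader with probability $1/(C\log n)$, we have $\mathbb{E}[m] = n/(C\log n)$, and since this mean is $\omega(\log n)$ a Chernoff bound gives $m \in \bigl[\tfrac{n}{2C\log n},\, \tfrac{2n}{C\log n}\bigr]$ with high probability. Using the upper bound, the number of nodes in small clusters is at most $\tfrac{2n}{C\log n}\cdot C'\log n = \tfrac{2C'}{C}\,n$, which by choosing $C'$ small relative to $C$ (e.g. $C' \le C/40$) is at most $0.05n$.

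Next I would lower-bound the total number of clustered nodes $k_t$ after round $t$. Because each ClusterPUSH has every clustered node push its cluster ID to an independent uniformly random node, this is exactly the standard PUSH growth process: given $k_t$, the expected number of newly recruited nodes is $(n-k_t)\bigl(1-(1-1/n)^{k_t}\bigr)$. This is $\Theta(k_t)$ while $k_t \le n/2$ (so $k_t$ nearly doubles each round), and it stays $\Omega(n)$ additively for $n/2 \le k_t \le 0.95n$, so a constant number of further rounds carries $k_t$ up to $0.95n$. The point that makes concentration clean is that $k_t \ge m = \Omega(n/\log n) = \omega(\log n)$ in \emph{every} round, so the number of recruits has mean $\omega(\log n)$; being a function of $k_t$ independent pushes each of which changes it by at most one, it concentrates around its mean (Chernoff / bounded differences) to within a lower-order term with high probability. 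A union bound over the $\Theta(\log\log n)$ rounds then shows $k_t$ grows by a constant factor each round until it exceeds $0.95n$; starting from $m \ge n/(2C\log n)$ this needs only $\log_2\log n + O(1)$ rounds, so choosing the loop length $\Theta(\log\log n)$ with the right constant yields $k_T \ge 0.95n$ with high probability.

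Combining the two bounds finishes the argument: with high probability at least $0.95n$ nodes are clustered while at most $0.05n$ of them lie in clusters of size below $C'\log n$, leaving at least $0.9n$ nodes in clusters of size at least $C'\log n$. The main obstacle is the growth step: making the exponential-growth argument rigorous across all $\Theta(\log\log n)$ rounds, i.e. handling the chain of conditioning on the history and verifying that the per-round concentration error remains a lower-order term both in the early doubling phase and as $k_t$ approaches a constant fraction of $n$. Everything else — the fixed cluster count, the two-sided Chernoff bound on $m$, and the final counting step — is essentially bookkeeping.
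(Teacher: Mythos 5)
Your proposal is correct and follows essentially the same route as the paper's own (sketched) proof: a Chernoff bound on the number of sampled cluster centers, standard constant-factor-per-round growth of the clustered set under PUSH gossip, and the counting step that bounds the nodes in small clusters by $(\text{number of clusters}) \cdot C'\log n$ with $C' \ll C$. Your additional observations --- that the leader set is fixed so the cluster count never changes, and the explicit bounded-differences concentration with a union bound over the $\Theta(\log\log n)$ rounds --- merely make rigorous what the paper leaves implicit in ``standard gossip analysis.''
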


\newcommand{\proofinitialclusters}{
\begin{proof}\shortOnly{[Proof of \Cref{lem:initialclusters}]}
The clustering at the end of the procedure is created by $\Theta(\log \log n)$ rounds of random PUSH gossip starting from cluster centers that were obtained by sampling nodes independently with probability $p = \frac{1}{C\log n}$. By a Chernoff bound, we initially have the expected $\Theta\left(\frac{n}{C\log n}\right)$ cluster centers occur with high probability. By standard gossip analysis, the set containing all clustered nodes grows by a constant factor in each round, so long as some constant fraction of the system remains unclustered. Furthermore, by choosing $C' \ll C$ at most an arbitrary small constant fraction of nodes account for clusters that are smaller than $C' \log n$, namely about $\frac{n}{C\log n} \cdot C' \log n$ many. In total, one can guarantee that at least a $90\%$ fraction of nodes are good, that is, contained in a large cluster. 
\end{proof}
}\fullOnly{\proofinitialclusters}

\begin{lemma}\label{lem:clusteroneprog}
Every iteration inside Procedure {\sc SquareClusters} in \clusterone squares the cluster size $s$ while ensuring that all clustered nodes are contained in clusters of the new size, with high probability.
\end{lemma}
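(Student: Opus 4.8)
The plan is to prove the two assertions separately for a single iteration, conditioning on the inductive invariant that, once ClusterResize($s$) has been applied at the top of the iteration, the $N = \Theta(n)$ clustered nodes are partitioned into $\Theta(N/s)$ clusters each of size in $[s,2s)$. This invariant holds for the first iteration by \Cref{lem:initialclusters} together with the preceding ClusterDissolve($s$); note also that the set of clustered nodes never changes inside the loop (only inactive \emph{clusters} merge into active ones, and unclustered nodes are never touched), so $N=\Theta(n)$ persists. First I would control the number of active clusters: after ClusterActivate($1/s$) each of the $\Theta(N/s)$ clusters is active independently with probability $1/s$, so the expected number of active clusters is $\Theta(N/s^2)$. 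Since $s \le \sqrt n/\log n$ gives $N/s^2 = \Omega(\log^2 n) = \omega(\log n)$, a Chernoff bound shows that there are $\Theta(N/s^2)$ active clusters, and hence $\Theta(N/s)$ active \emph{nodes}, with high probability.

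The heart of the argument is to show that after the first PUSH/merge repetition every active cluster has grown to size $\Omega(s^2)$. Fix an active cluster $A$ with $a = |A| \in [s,2s)$; its $a$ members each PUSH $A$'s ID to an independent uniformly random node. Because there are $\Theta(N/s)$ inactive clusters while $a = O(s) = o(\sqrt N)$, a balls-into-bins (birthday) estimate shows these pushes reach $\Omega(s)$ \emph{distinct} inactive clusters with high probability. A reached inactive cluster $C$ fails to merge into $A$ only if it is also reached by an active cluster of smaller ID; since the expected number of active clusters reaching any fixed inactive cluster is $\Theta(N/s)\cdot\Theta(s/n) = \Theta(N/n) < 1$, each reached $C$ is lost by $A$ only with some constant probability $p_0 < 1$. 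Revealing $A$'s pushes first and then those of the remaining active clusters, the ``lost'' events over $A$'s reached set are increasing functions of disjoint balls-into-bins occupancies and hence negatively associated, so a Chernoff bound gives that $A$ retains $\Omega(s)$ reached clusters and therefore absorbs $\Omega(s)\cdot\Omega(s)=\Omega(s^2)$ nodes, except with probability $e^{-\Omega(s)} = n^{-\Omega(1)}$. A union bound over the $\Theta(N/s^2) \le n$ active clusters makes this simultaneous for all of them, provided the constant $C'$ in $s=C'\log n$ is chosen large enough. This union-bound-friendly concentration is the main obstacle, and it is exactly why we need $s = \Omega(\log n)$.

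It remains to show the second PUSH/merge repetition leaves no clustered node outside a large cluster, i.e.\ that every inactive cluster surviving the first repetition is merged. The nodes absorbed in the first repetition now belong to the (still) active clusters, so the second ClusterPUSH is executed by $\Theta(N)$ active nodes. Any surviving inactive cluster $C$ still has size $\Theta(s)$, so the expected number of pushes landing on $C$ is $\Theta(N)\cdot\Theta(s/n)=\Theta(s)=\Omega(\log n)$; a Chernoff bound shows $C$ receives at least one such push, and hence merges, except with probability $e^{-\Omega(s)} = n^{-\Omega(1)}$. Union-bounding over the at most $n$ surviving inactive clusters shows that, with high probability, all clustered nodes end up inside active clusters.

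Finally I would assemble the two parts. After the two repetitions every clustered node lies in an active cluster, and every active cluster has size $\Omega(s^2)$ by the first part; taking this common lower bound as the new target, $s \gets \Theta(s^2)$ squares the cluster size as claimed. I deliberately do not prove a matching upper bound on the grown sizes, because the ClusterResize($\Theta(s^2)$) at the top of the next iteration splits any oversized cluster into pieces in the window $[s_{\mathrm{new}},2s_{\mathrm{new}})$ with $s_{\mathrm{new}}=\Theta(s^2)$; since all grown clusters already exceed $s_{\mathrm{new}}$, Resize restores exactly the inductive invariant (size window, $\Theta(N/s_{\mathrm{new}})$ clusters, $N$ clustered nodes) — which is also why a single ClusterDissolve before the loop suffices. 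All failure probabilities above are $n^{-\Omega(1)}$, so a final union bound over the $O(1)$ events of the iteration yields the stated with-high-probability conclusion.
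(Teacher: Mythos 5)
Your proof is correct and follows essentially the same route as the paper's: Chernoff concentration on the number of activated clusters, a balls-into-bins argument showing each active cluster absorbs $\Theta(s)$ distinct inactive clusters in the first PUSH/merge repetition (you additionally make rigorous, via negative association, the paper's glossed-over point that the smallest-ID tie-breaking only costs a constant fraction of the reached clusters), and the observation that the $\Theta(n)$ nodes now in active clusters suffice in the second repetition to hit every surviving $\Omega(\log n)$-sized inactive cluster with high probability. The only blemish is your claim that the expected number of active clusters reaching a fixed inactive cluster is $\Theta(N/n)<1$ --- this quantity is merely $\Theta(1)$, but your conclusion $p_0<1$ still holds since the probability of zero additional hits is $e^{-O(1)}$, a constant bounded away from zero.
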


\newcommand{\proofclusteroneprog}{
\begin{proof}\shortOnly{[Proof of \Cref{lem:clusteroneprog}]}
Initially, due to \Cref{lem:initialclusters} which is followed by ClusterDissolve($s$), we can assume that at least $0.9n$ nodes are clustered and divided into clusters of size between $s$ and two $2s$ after the ClusterResize($s$) in Line~\ref{cl1:clusterresize} is performed. The same holds from there on by induction. These $\Theta(n/s)$ clusters then get independently activated with probability $1/s$. Since $s \leq \frac{\sqrt{n}}{\log n}$ the expected number of activated clusters is $\Theta(n/s^2) = \omega(\log n)$ and a Chernoff bound shows that indeed $\Theta(n/s^2)$ clusters are activated with high probability. Now, in the first ClusterPUSH/ClusterMerge iteration of Procedure {\sc SquareClusters}, these active clusters PUSH $\Theta(s)$ messages each for a total of $\Theta(n/s)$ messages in order to recruit the $\Theta(n/s)$ inactive clusters. By standard balls and bins analysis, a linear fraction of every set of $s = \Omega(\log n)$ PUSHes of each cluster lands in a unique cluster and result in a merge. Therefore, after the first iteration all active clusters will recruit $\Theta(s)$ inactive clusters and thus grow to size $\Theta(s^2)$. However, there is also a constant fraction of inactive clusters that did not get recruited. The second ClusterPUSH/ClusterMerge iteration ensures that these clusters merge with some active cluster to ensure that all clustered nodes remain clustered. This is indeed the case since there are $\Theta(n)$ total number of nodes contained in the active clusters which PUSH out in the second iteration. Every inactive cluster of size $s= \Omega(\log n)$ is therefore reached and merged into an active cluster with high probability.
\end{proof}
}\fullOnly{\proofclusteroneprog}

\begin{lemma}\label{lem:mergeallclusters}
After Procedure {\sc MergeAllClusters} in \clusterone, with high probability, there is a single cluster which contains at least $0.9 n$ nodes.
\end{lemma}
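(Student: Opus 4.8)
The plan is to track the cluster whose leader has the globally smallest ID, call it $l^*$, and to show that the two ClusterPUSH/ClusterMerge repetitions of Procedure {\sc MergeAllClusters} cause this cluster to absorb every clustered node. Two facts make $l^*$ a convenient anchor: since $l^*$ is the smallest ID in the whole network, any cluster whose leader ever receives $l^*$ picks it as the smallest-received ID and merges into it; and $l^*$ itself never receives a smaller ID, so it remains a leader throughout and its cluster is always well-formed. By \Cref{lem:clusteroneprog}, when {\sc SquareClusters} terminates at least $0.9n$ nodes are partitioned into clusters of size $\Theta(s)$ with $\frac{\sqrt{n}}{\log n} < s = O(n/\log^2 n)$; in particular there are only $k = \Theta(n/s) = O(\sqrt{n}\,\log n)$ clusters.

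For the first repetition, the $\Theta(s)$ nodes of the $l^*$-cluster each PUSH the ID $l^*$ to a uniformly random node. Modelling this as $\Theta(s)$ balls thrown into the $k$ clusters (each target node lies in a cluster of size $\Theta(s)$, so the clusters act as near-uniform bins among the clustered $0.9n$ fraction), a standard balls-into-bins computation shows that $\Theta(\min(s,k))$ distinct clusters receive $l^*$, and each of them merges into $l^*$. Since every such cluster contributes $\Theta(s)$ nodes, after the first repetition the $l^*$-cluster has size $n_1 = \Omega(\min(s^2,n))$ with high probability, the concentration following from a Chernoff/bounded-differences bound on the number of occupied bins.

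The crux is the second repetition. The $l^*$-cluster now issues $n_1 = \Omega(\min(s^2,n))$ uniformly random PUSHes of $l^*$. For any cluster $C_j$ (of size $\Theta(s)$) that is not yet following $l^*$, the probability that none of these PUSHes lands in $C_j$ is at most $(1-\Theta(s/n))^{n_1} \le e^{-\Omega(n_1 s/n)}$. This is where the termination threshold pays off: since $s > \frac{\sqrt{n}}{\log n}$, the exponent $\Omega(n_1 s/n)$ is $\omega(\log n)$ in every regime --- it is $\Omega(s^3/n) = \Omega(\sqrt{n}/\log^3 n)$ when $s \le \sqrt{n}$ and at least $\Omega(s) = \Omega(\sqrt{n})$ otherwise --- so each miss probability is $n^{-\omega(1)}$. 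A union bound over the at most $k = O(\sqrt{n}\,\log n)$ clusters then shows that, with high probability, every remaining cluster receives $l^*$ and merges into it. Hence all $\ge 0.9n$ clustered nodes end up in the single cluster led by $l^*$, as claimed.

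I expect the main obstacle to be this second-round coverage bound, and in particular making precise that one repetition alone does not suffice: a single repetition only grows the $l^*$-cluster by a factor $\Theta(s)$, reaching merely an $\Theta(s^2/n)$ fraction of clusters when $s$ is near its lower limit, and it is precisely the $\Theta(s^2)$ PUSH budget produced by the first repetition that the exponent $s^3/n$ in the second repetition needs. A secondary bookkeeping point is that intermediate merges into leaders other than $l^*$ may create stale $follow$-pointers; this is harmless for the conclusion, because the $l^*$-cluster stays well-formed (so its PUSH in the second repetition is valid) and every node outside it is reached by an $l^*$-PUSH and repointed directly to $l^*$.
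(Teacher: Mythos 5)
Your proof is correct and takes essentially the same route as the paper's: both anchor on the cluster whose leader has the globally smallest ID, use the first ClusterPUSH/ClusterMerge repetition to grow it to size $\Omega(\min(s^2,n)) = \Omega(n/\log^2 n)$, and use the second repetition to show every remaining cluster receives that ID and merges, with high probability. Your write-up simply fills in the balls-into-bins concentration, the $e^{-\Omega(n_1 s/n)}$ miss-probability and union-bound computation, and the stale-pointer bookkeeping that the paper's three-sentence sketch leaves implicit.
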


\newcommand{\proofmergeallclusters}{
\begin{proof}\shortOnly{[Proof of \Cref{lem:mergeallclusters}]}
\Cref{lem:clusteroneprog} guarantees that before Procedure {\sc MergeAllClusters} is invoked at least $0.9 n$ nodes are contained in clusters of size $\Theta(s)$ with $s \geq \frac{\sqrt{n}}{\log n}$. We now consider the cluster with the smallest ID. The first ClusterPUSH/ClusterMerge iteration of Procedure {\sc MergeAllClusters} will, with high probability, grow this cluster to size $\Omega(\frac{n}{\log^2 n})$. The second iteration then recruits all remaining clusters, with high probability.
\end{proof}
}\fullOnly{\proofmergeallclusters}

Lastly, it is well known (see, e.g., \cite{KarpSSV2000}) that the standard PULL gossip takes only $O(\log \log n)$ rounds to inform the remaining constant or even $1 - 1/\log^{\Theta(1)} n$ fraction of nodes. This is exactly what is used in Procedure {\sc UnclusteredNodesPull} to have all unclustered nodes join the single cluster in the end:

\begin{lemma}\label{lem:pull}
The PULL gossip in Procedure {\sc UnclusteredNodesPull} of \clusterone has all nodes join the single cluster, with high probability, in $O(\log \log n)$ rounds.
\end{lemma}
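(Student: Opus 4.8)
The plan is to track the number $u_t$ of unclustered nodes at the start of round $t$ of Procedure {\sc UnclusteredNodesPull} and show it decreases doubly-exponentially. By \Cref{lem:mergeallclusters} there is a single cluster, so initially at least a $1/\log^{\Theta(1)} n$ fraction of nodes are clustered, i.e. $u_0 \le n - n/\log^{\Theta(1)} n$. In one PULL round every unclustered node contacts an independent uniformly random node and copies its $follow$ value, joining the single cluster unless it happens to contact one of the other $u_t-1$ unclustered nodes. Hence each unclustered node stays unclustered independently with probability $(u_t-1)/(n-1)$, so $u_{t+1}$ is a sum of $u_t$ independent Bernoulli variables with
\begin{equation*}
\mathbb{E}[u_{t+1}] \;=\; u_t\cdot\frac{u_t-1}{n-1}\;\approx\;\frac{u_t^2}{n}.
\end{equation*}
Writing $\phi_t=u_t/n$ for the unclustered fraction, this is the familiar quadratic recursion $\mathbb{E}[\phi_{t+1}]\approx\phi_t^2$, which is the source of the doubly-exponential progress.

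I would then split the analysis into two shrinking regimes, applying a Chernoff bound in each (legitimate since the summands are independent \emph{within} a round, as every node's pull depends only on its own random choice and the start-of-round configuration). First, while the clustered fraction $1-\phi_t$ is still below a constant, the recursion gives $1-\phi_{t+1}\approx 2(1-\phi_t)$, so the clustered fraction roughly doubles each round. Here the number of newly clustered nodes per round is $\Theta\big(n(1-\phi_t)\big)=\Omega\big(n/\log^{\Theta(1)} n\big)$, which is polynomially large, so Chernoff guarantees the constant-factor growth with failure probability $n^{-\omega(1)}$, and $O(\log\log n)$ rounds raise $1-\phi_t$ to a constant. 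Second, once $\phi_t\le 1/2$ the map $\phi_{t+1}\approx\phi_t^2$ drives $\phi_t$ down quadratically; as long as $\mathbb{E}[u_{t+1}]=\Omega(\log n)$, equivalently $u_t=\Omega(\sqrt{n\log n})$, Chernoff again yields $u_{t+1}\le 2u_t^2/n$ with high probability, so a further $O(\log\log n)$ rounds push $u_t$ below $\sqrt{n\log n}$.

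For the endgame, once $u_t\le\sqrt{n\log n}$ Chernoff no longer controls the count (the mean of $u_{t+1}$ is $O(\log n)$ or smaller), so I would switch to a direct union bound exploiting that $u_t$ is non-increasing, since unclustered nodes never revert. For every subsequent round $t'\ge t$ the unclustered fraction is at most $\sqrt{\log n/n}$, so a fixed still-unclustered node stays unclustered with probability at most $2\sqrt{\log n/n}$ in each round, independently across rounds; hence it remains unclustered for $r$ further rounds with probability at most $(2\sqrt{\log n/n})^r$. A union bound over the at most $\sqrt{n\log n}$ candidate nodes shows that for a suitable constant $r$ all of them become clustered with probability at least $1-n^{-C}$. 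Combining the $O(1)$ endgame rounds with the two $O(\log\log n)$-round regimes, and taking a final union bound over the $O(\log\log n)$ rounds to keep the total failure probability $n^{-\omega(1)}$, gives the claimed $O(\log\log n)$ bound.

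The main obstacle is precisely this endgame: the quadratic concentration weakens as $u_t$ drops, so the argument must pivot away from Chernoff. The key observation that makes the pivot clean is the monotonicity of $u_t$, which pins the unclustered fraction at $o(1)$ for all remaining rounds and thereby lets a union bound over just $O(1)$ rounds finish off the last nodes with high probability. The earlier regimes are then routine, so I would, if desired, simply cite the standard PULL-gossip endgame of \cite{KarpSSV2000} in place of spelling out the three regimes.
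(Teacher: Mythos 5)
Your proposal is correct and follows essentially the same route as the paper's proof: track the unclustered fraction $x$, use per-round Chernoff bounds to show the quadratic shrinkage $x \to O(x^2)$ for $O(\log\log n)$ rounds, and then finish the last few nodes in $O(1)$ additional rounds once concentration fails. You are merely more explicit than the paper about the thresholds where each regime applies (the initial doubling regime, the cutoff $u_t = \Omega(\sqrt{n\log n})$ for the Chernoff phase, and the monotonicity-plus-union-bound endgame, which the paper compresses into ``it takes at most a constant number of rounds until each remaining node pulls from a clustered node'').
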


\newcommand{\proofpull}{
\begin{proof}\shortOnly{[Proof of \Cref{lem:pull}]}
It is well known that the standard uniform PULL gossip terminates with high probability in $\Theta(\log \log n)$ if one starts with at least $n / \log^{\Theta(1)} n$ nodes being informed. For completeness we include the analysis here:
We look at the fraction $x$ of unclustered nodes and note that in one round any unclustered node (pulling a random node) remains unclustered with probability $x$. As long as the number of unclustered nodes is still at least $\Theta(\log n)$ a Chernoff bound shows that the quantity $x$ shrinks to at most $2x^2$ with high probability after each pull. This can happen at most for $2\log \log n$ rounds. Lastly, with high probability, it takes at most a constant number of rounds until each remaining node pulls from a clustered node. This shows that the $\Theta(\log \log n)$ rounds of PULL gossip indeed, with high probability, lead to one cluster containing all nodes. 
\end{proof}
}\fullOnly{\proofpull}

\noindent The correctness and efficiency of \clusterone now follows immediately:

\begin{theorem}\label{thm:clusterone}
\clusterone terminates after $O(\log \log n)$ rounds and with high probability spreads any message to all nodes.
\end{theorem}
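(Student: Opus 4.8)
The plan is to obtain the theorem essentially by chaining the four lemmas already established, so the main work is bookkeeping: verifying that the per-phase guarantees compose into a single informed cluster, and that the total round count stays $O(\log\log n)$.

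For correctness I would argue procedure by procedure, following the flow of \Cref{alg:cluster1}. By \Cref{lem:initialclusters}, after {\sc GrowInitialClusters} at least $0.9n$ nodes lie in clusters of size $C'\log n$, with high probability. Feeding this into {\sc SquareClusters}, \Cref{lem:clusteroneprog} says each loop iteration squares the cluster size $s$ while keeping the clustered nodes clustered; applying this inductively across all iterations maintains the invariant that $\ge 0.9n$ nodes remain clustered up to the point where $s$ exceeds $\sqrt n/\log n$. \Cref{lem:mergeallclusters} then collapses these into one cluster holding $\ge 0.9n$ nodes, and \Cref{lem:pull} shows the final {\sc UnclusteredNodesPull} absorbs every remaining node into that single cluster. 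At this point all $n$ nodes share one leader, so the concluding ClusterShare(message)---which runs in $O(1)$ rounds by the primitives of \Cref{sec:primitives}---delivers the message to everyone. The only new observation needed is that the exit invariant of each lemma is exactly the entry hypothesis of the next, which holds by construction.

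For the round complexity the key quantity to pin down is the number of iterations of the {\sc SquareClusters} loop. Since each pass replaces $s$ by $\Theta(s^2)$, the quantity $\log s$ roughly doubles each iteration; starting from $s=C'\log n$, i.e.\ $\log s=\Theta(\log\log n)$, and stopping once $s>\sqrt n/\log n$, i.e.\ $\log s=\Theta(\log n)$, the number of doublings is $\log\bigl(\tfrac{\log n}{\log\log n}\bigr)=\Theta(\log\log n)$. Every cluster primitive invoked inside the loop (ClusterResize, ClusterActivate, and the two ClusterPUSH/ClusterMerge repetitions) costs $O(1)$ rounds, so {\sc SquareClusters} as a whole costs $O(\log\log n)$ rounds. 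The remaining procedures are immediate: {\sc GrowInitialClusters} and {\sc UnclusteredNodesPull} are each explicitly $\Theta(\log\log n)$ repetitions of an $O(1)$-round primitive, and {\sc MergeAllClusters} is a constant number of rounds. Summing yields the claimed $O(\log\log n)$ bound.

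Finally I would handle the high-probability guarantee by a union bound. Each of the four lemmas holds with probability at least $1-n^{-C}$, and the only one applied more than once is \Cref{lem:clusteroneprog}, invoked once per {\sc SquareClusters} iteration, i.e.\ $O(\log\log n)$ times. Union-bounding over these $O(\log\log n)$ events (plus the $O(1)$ remaining applications) inflates the failure probability by only a $\log\log n$ factor, which is absorbed into the constant $C$ and still leaves success probability $1-n^{-\Omega(1)}$. I do not expect any single estimate to be the obstacle here, since all of them are delegated to the lemmas; the one point demanding care is that the inter-phase invariant be stated strongly enough---in particular that {\sc SquareClusters} never lets the clustered fraction drop below $0.9n$ despite the constant fraction of clusters missed by the first ClusterPUSH, which is precisely what the second ClusterPUSH/ClusterMerge repetition in \Cref{lem:clusteroneprog} secures.
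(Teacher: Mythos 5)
Your proposal is correct and follows essentially the same route as the paper's own proof: chain \Cref{lem:initialclusters}, \Cref{lem:clusteroneprog}, \Cref{lem:mergeallclusters}, and \Cref{lem:pull} for correctness, then count $O(\log\log n)$ iterations per loop (with the squaring of $s$ bounding the {\sc SquareClusters} loop) and $O(1)$ rounds per primitive. Your added details---the explicit doubling count $\log\bigl(\tfrac{\log n}{\log\log n}\bigr)$ and the union bound over the $O(\log\log n)$ lemma applications---are left implicit in the paper but are exactly the right justifications.
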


\newcommand{\proofclusterone}{
\begin{proof}\shortOnly{[Proof of \Cref{thm:clusterone}]}
The ClusterShare(message) performed at the end of \clusterone correctly spreads any message to all nodes since according to \Cref{lem:pull} all nodes are contained in one single cluster. The running time of \clusterone is furthermore easily seen to be $O(\log \log n)$ rounds. In particular, the loops inside Procedures {\sc GrowInitialClusters} and {\sc UnclusteredNodesPull} run explicitly for $O(\log \log n)$ iterations, while the loop inside Procedure {\sc SquareClusters} squares the parameter $s$ in each iteration, and therefore runs for at most $\Theta(\log \log n)$ iterations as well. To complete the proof, we refer to \Cref{sec:primitives} for the fact that all cluster primitives performed in these loops can be performed in a constant number of rounds.
\end{proof}
}\fullOnly{\proofclusterone}

\section{Achieving Optimal Linear Message Complexity}\label{sec:linear}

In this section we prove \Cref{thm:main} by explaining how the \clusterone algorithm can be tweaked to also have optimal message- and bit-complexity without compromising its optimal $O(\log \log n)$ round-complexity. We call the resulting algorithm \clustertwo.

\subsection{Overview}\label{sec:clustertwoOverview}

First, we give the intuition behind achieving a linear message complexity and describe in which aspects \clustertwo is different from \clusterone. The high-level recipe of \clustertwo is the same as \clusterone. The key idea for achieving the optimal message-complexity is in modifying Procedures {\sc GrowInitialClusters} and {\sc SquareClusters} to work with only $\Theta(n/ \log n)$ clustered nodes. This guarantees that even if all clustered nodes actively transmit in these rounds, only $o(n)$ messages are used for these parts. After merging all clusters, we are left with one large cluster of size $\Theta(n/\log n)$. While the standard PULL gossip would still finish in $O(\log \log n)$ rounds this would again take to many messages (of unclustered nodes PULLing unclustered nodes). We therefore PUSH the cluster ID to $\Theta(n)$ nodes first before doing the final PULL phase. With so many informed nodes each node expects to PULL at most a constant number of times and it is easy to see that in total at most $O(n)$ messages are used with high probability. 

All in all a much more delicate control on the number of clusters, their size and the number of clustered nodes is required. This is also important to provably guarantee an optimal bit-complexity. The subtle problem here is that a ClusterResize($s$) on a cluster of size $\omega(s)$ leads to messages of size $\omega(\log n)$ (see footnote in \Cref{sec:primitives}). This control is achieved as follows:

To control the number of nodes clustered in the Procedures {\sc GrowInitialClusters} we have clusters measure their growth. More precisely before and after each PUSH in the initial recruiting phase each cluster checks its size. If a cluster has size at least $\Omega(\log^3 n)$ but grew by less than a factor of $2 - \frac{1}{\log n}$ then it stops recruiting. A Chernoff Bound shows that no cluster will stop recruiting if less than  $n / 2\log n$ nodes are clustered.  On the other hand, similar arguments as for \clusterone show that indeed $\Theta(n / \log n)$ nodes will get recruited of which most lie in clusters of size $s = \Omega(\log^3 n)$. A ClusterResize($C' \log n)$ called for every large cluster in every iteration furthermore ensures that no individual cluster gets too big. For the Procedure{\sc SquareClusters} we then again merge clusters. Due to the smaller number of clustered nodes most cluster requests now go to unclustered nodes and get ignored. Still, any iteration starting with clusters of size $s$ ends up with larger clusters of size $\Theta(s^2 / \log n) = \omega(s^{1.5})$. Growing the cluster sizes from $s = \Theta(\log^3 n)$ to $\Omega(n / \log^2 n)$ therefore still takes only $O(\log \log n)$ rounds. Chernoff bounds over the larger $\Omega(\log^3 n)$ sized clusters still guarantee that the new cluster size can be predicted up to a constant with high probability which ensures that no ClusterResize($s$) is applied to a too large (or too small) cluster. 

The formal description of \clustertwo is given as \Cref{alg:cluster2}\shortOnly{in \Cref{app:alg2}}.

\newcommand{\algorithmtwo}{
\begin{algorithm}[htb!]
\caption{\clustertwo}
\begin{algorithmic}[1]

\State \Call{GrowInitialClusters}{} \Comment{cluster $\Theta(n/\log n)$ nodes into clusters of size $(1 + \Theta(1))(C'\log n)$} 
\State \Call{SquareClusters}{} \Comment{repeatedly merge clusters to get to size $\frac{\sqrt{n}}{\log^2 n}$}
\State \Call{MergeAllClusters}{} \Comment{as in \Cref{alg:cluster1}}
\State \Call{BoundedClusterPush}{} \Comment{clusters push to all but $\Theta(n)$ unclustered nodes}
\State \Call{UnclusteredNodesPull}{} \Comment{as in \Cref{alg:cluster1}}
\State ClusterShare(message)
	
\Statex
\Procedure{GrowInitialClusters}{}
\State With probability $1 / C \log^4 n$ set $follow$ to $ownID$ otherwise to $\infty$
\State ClusterActivate(1)
\For{$\Theta(\log\log n)$ times}  
	\State active clusters: ClusterPUSH($follow$)
	\State unclustered nodes: set $follow$ to any received ID (if any)
	\If {ClusterSize $ \geq C'\log^3 n$}
		\If {cluster grew by less than a factor of $(2 - 1/\log n)$}
			\State deactivate cluster
		\Else
			\State ClusterResize($C' \log^3 n$)
		\EndIf
	\EndIf
\EndFor{}
\EndProcedure

\Statex
\Procedure{squareClusters}{}
\State $s = C'\log^3 n$
\State ClusterDissolve($s$)
\Repeat 
  \State ClusterResize($s$) 
  \State ClusterActivate($1/s$)
  \For{two repetitions} \Comment{Active Clusters recruit all Inactive Clusters}
    \State active clusters: ClusterPUSH($follow$)
    \State inactive clusters: ClusterMerge(random received ID (if any))
  \EndFor{}
 $s \gets \Theta(s^2/\log n)$
\Until{$s > \frac{\sqrt{n}}{\log^2 n}$}
\EndProcedure
	
\Statex

\Procedure{BoundedClusterPush}{} 
\State ClusterActivate(1)
\For {$\Theta(\log \log n)$ rounds}
                \State ClusterPUSH($follow$)
                \State unclustered nodes: set $follow$ to any received ID (if any)
                \State ClusterSize
                \If{cluster grows by less than a factor of $1.1$}
									\State deactivate cluster
                \EndIf
\EndFor 
\EndProcedure

\end{algorithmic}
\label{alg:cluster2}
\end{algorithm}
}\fullOnly{\algorithmtwo}

\subsection{Correctness Proof}

In this section we prove \Cref{thm:main} by showing that \clustertwo spreads a message with the desired optimal round-, message- and bit-complexity. 

The proof follows along the same line as \Cref{thm:clusterone} while including the extra arguments given in \Cref{sec:clustertwoOverview}. We again break the proof into several lemmas, which then lead to the main result.

\begin{lemma} \label{lem:cl2samp}
If at most $n/f(n)$ nodes are clustered, then a ClusterPUSH by some cluster $X$, where $|X| \geq \Omega(\log n)$, grows $X$ by factor at least $(2-\Theta(1/f(n)))$ with high probability.
\end{lemma}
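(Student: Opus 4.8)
The plan is to track how many previously unclustered nodes the cluster $X$ recruits in a single ClusterPUSH, and to show that this number $Y$ is, with high probability, at least $(1-\Theta(1/f(n)))\,|X|$, so that the new size $|X|+Y$ satisfies $(|X|+Y)/|X| \ge 2-\Theta(1/f(n))$. Recall that a ClusterPUSH has each of the $|X|$ members of $X$ contact an independent, uniformly random node and push $X$'s ID, after which an unclustered recipient adopts this ID and joins $X$. Hence the relevant random object is a ``balls into bins'' experiment: $|X|$ balls (pushes) thrown uniformly into $n$ bins (nodes), and $Y$ is essentially the number of bins that both correspond to unclustered nodes and receive at least one ball.

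First I would lower bound $\mathbb{E}[Y]$. Let $U$ be the set of unclustered nodes; by assumption $|U| \ge n(1-1/f(n))$. For a fixed $v \in U$ we have $\Pr[v \text{ is hit}] = 1-(1-1/n)^{|X|} \ge (|X|/n)\bigl(1-|X|/(2n)\bigr)$, so the expected number of distinct unclustered nodes hit is at least $|U|\cdot(|X|/n)(1-|X|/(2n))$. The two losses relative to the naive count $|X|$ are pushes that land on already-clustered nodes and pushes from $X$ that collide on the same bin; the first is controlled by $|U|/n \ge 1-1/f(n)$, and the second by the crucial fact that $X$ is itself clustered, so $|X| \le n/f(n)$ and thus $|X|/(2n) \le 1/(2f(n))$. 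Combining, $\mathbb{E}[Y] \ge |X|\bigl(1-\Theta(1/f(n))\bigr)$, giving expected growth factor $2-\Theta(1/f(n))$. If the tie-breaking rule ``set $follow$ to any received ID'' may divert a node hit by $X$ to a competing cluster, I would absorb this into the same error term: since all clusters together push at most $n/f(n)$ times, a fixed node recruited by $X$ is contested with probability only $O(1/f(n))$.

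The second and main step is concentration. Writing the (surrogate lower bound for) $Y$ as a sum of occupancy indicators $\mathbf{1}[v \text{ hit}]$ over $v \in U$, these indicators are negatively associated, so the standard Chernoff lower-tail bound applies to their sum; a multiplicative deviation $\delta$ then fails with probability $\exp\bigl(-\Omega(\delta^2\,\mathbb{E}[Y])\bigr)$. The delicate point is that to preserve the $\Theta(1/f(n))$ precision of the conclusion one must take $\delta = \Theta(1/f(n))$, whence the failure exponent is $\Omega\bigl(|X|/f(n)^2\bigr)$. This is $\Omega(\log n)$, and hence yields high probability, precisely because in every application $X$ is a large cluster — of size $\Omega(\log^3 n)$ with $f(n)=\Theta(\log n)$ — rather than merely $\Omega(\log n)$; a cluster that was only barely $\Omega(\log n)$ would concentrate too loosely to distinguish $2-\Theta(1/f(n))$ from $2-o(1)$ growth.

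I expect this concentration step to be the main obstacle, for two reasons: one must justify Chernoff despite the dependence introduced by counting \emph{distinct} occupied bins (handled by negative association), and one must track the interplay between $|X|$, $f(n)$, and the target precision $\Theta(1/f(n))$ carefully. Once concentration is in hand, $Y \ge |X|\bigl(1-\Theta(1/f(n))\bigr)$ holds with high probability and the conclusion $(|X|+Y)/|X| \ge 2-\Theta(1/f(n))$ follows immediately.
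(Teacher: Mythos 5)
Your proposal is correct and follows the same route as the paper's own one-paragraph proof: lower-bound the expected number of distinct unclustered nodes recruited by the ClusterPUSH, then apply a Chernoff-type concentration bound. What you add is precisely what the paper leaves implicit, and two of your additions are substantive. First, you replace the paper's phrase ``the recruitment successes are essentially independent'' with the correct formal device --- negative association of the bin-occupancy indicators --- and you explicitly account for contention from competing clusters' pushes (at most $n/f(n)$ foreign pushes in total, so a fixed recruit is contested with probability $O(1/f(n))$), a loss the paper silently folds into its ``approximately $1-1/f(n)$'' per-push estimate; note only that this contention loss needs its own (easy) concentration step, e.g.\ conditioning on $X$'s pushes and applying a binomial Chernoff bound with mean at most $|X|/f(n)$. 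Second, and more importantly, your bookkeeping of the concentration exponent exposes a genuine imprecision in the lemma as stated: with deviation $\delta = \Theta(1/f(n))$ the lower tail fails with probability $\exp\bigl(-\Omega(|X|/f(n)^2)\bigr)$, which is not $n^{-\Omega(1)}$ when $|X| = \Theta(\log n)$ and $f(n) = \omega(1)$; the claimed precision with high probability really requires $|X| = \Omega(f(n)^2 \log n)$. As you observe, every invocation in the paper satisfies this --- \Cref{lem:cl2init} applies the growth test only to clusters of size $\Omega(\log^3 n)$ with $f(n)=\Theta(\log n)$, and \Cref{lem:boundedpush} uses a giant cluster at constant precision --- so the downstream analysis is unaffected, but your strengthened hypothesis is the version of the lemma that is actually proved and used.
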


\newcommand{\proofcltwosamp}{
\begin{proof}\shortOnly{[Proof of \Cref{lem:cl2samp}]}
Each PUSH request of one of the $|X|$ nodes is hits a unique unclustered node with probability of approximately $(1 - \frac{1}{f(n)})$. The expected number of nodes recruited for cluster $X$ is therefore $(1 - \frac{1}{2f(n)})|X|$ for a total number of $|X| + (1 - \frac{1}{2f(n)})|X|$ and a  growth factor of $(2-\Theta(1/2f(n)))$.
Since the recruitment success of different nodes are essentially independent of each other applying a Chernoff type bound shows that the growth factor is close to the expected and in particular is $(2-\Theta(1/f(n)))$, with high probability.
\end{proof}
}\fullOnly{\proofcltwosamp}

\begin{lemma} \label{lem:cl2init}
Procedure {\sc GrowInitialClusters} of \clustertwo leads to $\Theta(n/\log n)$ clustered nodes that are contained in clusters of size $(1 + \Theta(1))(C'\log n)$, with high probability, while at most $O(n)$ messages of $\Theta(\log n)$ bits each are sent.
\end{lemma}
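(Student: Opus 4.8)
The plan is to track two quantities in parallel over the $\Theta(\log\log n)$ recruiting rounds: the total number $m$ of clustered nodes and the sizes of the individual clusters. First I would establish the base case: since each node becomes a singleton cluster center independently with probability $1/(C\log^4 n)$, a Chernoff bound gives $\Theta(n/\log^4 n)$ active singleton centers with high probability. I would then show that $m$ grows by a factor of essentially $2$ per round as long as the system is mostly unclustered. Aggregating all pushes of a round, each of the $m$ pushes lands on a distinct unclustered node with probability $\approx 1-m/n$, and since $m = \Omega(n/\log^4 n) = \omega(\log n)$ a Chernoff bound over all pushes shows the total count grows by a factor $(2-\Theta(m/n))$ with high probability. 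This is the aggregate counterpart of \Cref{lem:cl2samp} and conveniently also covers the tiny clusters to which \Cref{lem:cl2samp} itself does not apply. Starting from $n/\log^4 n$, factor-$2$ growth reaches $\Theta(n/\log n)$ after $\Theta(\log\log n)$ rounds; crucially, the count $m = n/\log n$ is reached exactly when the average cluster size equals $(n/\log n)/(n/\log^4 n) = \Theta(\log^3 n)$, i.e. matches the resize target $C'\log^3 n$. This is the design reason behind the $1/(C\log^4 n)$ sampling rate.

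The delicate part, which I expect to be the main obstacle, is the self-regulation of the deactivation rule. I would apply \Cref{lem:cl2samp} with $f(n)=n/m$ to a large cluster (size $\geq C'\log^3 n = \Omega(\log n)$), obtaining a growth factor $2-\Theta(m/n)$ with high probability. On the one hand, while $m < n/(2\log n)$ this factor exceeds the deactivation threshold $2-1/\log n$, provided the constants are arranged appropriately, so a union bound over all clusters shows that no cluster deactivates prematurely; this yields the lower bound $m=\Omega(n/\log n)$, matching the claim in the overview. On the other hand, once $m=\Theta(n/\log n)$ the growth factor drops below $2-1/\log n$, so clusters deactivate and permanently stop recruiting; choosing the constant in the $\Theta(\log\log n)$ loop large enough guarantees that essentially all clusters have saturated by the final round, capping $m=O(n/\log n)$. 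The crux is pinning down the exact threshold at which deactivation fires and verifying that the $\Omega(\log^3 n)$ cluster sizes give concentration strong enough to union-bound over all $\Theta(n/\log^4 n)$ clusters.

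For the size claim I would combine this doubling behavior with the \texttt{ClusterResize}$(C'\log^3 n)$ calls. Since a resize redistributes a cluster into $\lfloor s'/s\rfloor$ nearly equal pieces, each resulting cluster has size in $[C'\log^3 n, 2C'\log^3 n)$, so after the last round all but a negligible fraction of the clustered nodes lie in clusters of size $(1+\Theta(1))C'\log^3 n$. The few possible undersized stragglers are harmless and are anyway removed by the \texttt{ClusterDissolve} at the start of {\sc SquareClusters}, so I would phrase the conclusion in terms of the typical cluster size that \emph{most} clustered nodes attain.

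Finally I would bound the complexity. Each clustering primitive costs $O(1)$ messages per participating node, and only clustered nodes send, so each round uses $O(m)=O(n/\log n)$ messages; over $\Theta(\log\log n)$ rounds this totals $O(n\log\log n/\log n)=o(n)\subseteq O(n)$ messages. For the bit bound the key observation is that a cluster grows by at most a factor $2$ between consecutive size checks and is reset to size $<2C'\log^3 n$ by each resize, so every \texttt{ClusterResize}$(C'\log^3 n)$ is invoked with $s'/s=O(1)$; by the footnote in \Cref{sec:primitives} its messages then carry only $O(\log n)$ bits, and all remaining messages (IDs, counts, and $follow$ values) trivially fit in $\Theta(\log n)$ bits.
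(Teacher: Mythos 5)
Your proposal is correct and takes essentially the same route as the paper's own proof: a Chernoff bound on the $1/(C\log^4 n)$ sampling, an application of \Cref{lem:cl2samp} to show that the $(2-1/\log n)$ deactivation threshold does not fire while fewer than $\Theta(n/\log n)$ nodes are clustered, the $\frac{n}{C\log^4 n}\cdot C'\log^3 n$ accounting for nodes in small clusters, and the twin observations that only clustered nodes transmit (giving $O(n\log\log n/\log n)=O(n)$ messages) and that at most factor-$2$ growth between consecutive resizes keeps every ClusterResize message at $\Theta(\log n)$ bits. Your additional touches---the aggregate doubling analysis of the total clustered count and the explicit saturation argument capping $m=O(n/\log n)$ once the growth factor drops below the threshold---only spell out what the paper leaves implicit.
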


\newcommand{\proofcltwoinit}{
\begin{proof}\shortOnly{[Proof of \Cref{lem:cl2init}]}
For the initial sampling of singleton clusters, which happens with probability $1/C \log^4 n$, a Chernoff bound shows that $\Theta(\frac{n}{C\log^4 n})$ such singleton clusters are created, with high probability. We would like to argue as in \Cref{lem:initialclusters} that the growth of the set of clustered nodes is exponential until at least $\Omega(n/\log n)$ nodes are clustered. We need to show that the newly introduced size-control does not prevent this exponential growth. By \Cref{lem:cl2samp}, as long as at least $n (1 - O(1/\log n))$ nodes are unclustered each cluster will grow by a factor of at least $2 - 1/\log n$, with high probability. Hence, the size control mechanism does not inhibit the exponential growth until at least $\Theta(n/\log n)$ nodes are clustered. Since at most about $\frac{n}{C\log^4 n} \cdot (C' \log^3 n)$ nodes account for clusters that are smaller than $C' \log^3 n$, indeed at least $\Theta(n/\log n)$ nodes are contained in large clusters.

To guarantee that the message complexity incurred by the initialization is small we note that only clustered nodes are transmitting. This leads to at most $O(n \log \log n / \log n) = O(n)$ messages sent. Since every cluster can grow by at most a factor of two in each iteration it is furthermore clear that all message sizes used in a ClusterResize contain at most two IDs. This results in messages of size $\Theta(\log n)$, as claimed. 
\end{proof}
}\fullOnly{\proofcltwoinit}

\begin{lemma} \label{lem:cl2main}
Before and after every iteration inside Procedure {\sc SquareClusters} in \clustertwo, $\Theta(n/\log n)$ nodes are contained in clusters of size at least $s$, with high probability. Furthermore, the total number of messages sent during Procedure {\sc SquareClusters} is $O(n)$, each containing $\Theta(\log n)$ bits. 
\end{lemma}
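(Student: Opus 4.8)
The plan is to establish both claims of \Cref{lem:cl2main} by a single induction over the iterations of the \textsc{SquareClusters} loop, mirroring the structure of \Cref{lem:clusteroneprog} but carrying along the reduced clustered density of $\Theta(1/\log n)$ and, most importantly, a tight control on individual cluster sizes. The ``before'' part of the invariant for iteration $k$ is just the ``after'' part for iteration $k-1$, so it suffices to treat the inductive step together with a base case. The base case follows from \Cref{lem:cl2init} and the single \textsc{ClusterDissolve}($s$) with $s=C'\log^3 n$: afterwards $\Theta(n/\log n)$ nodes lie in clusters of size $\in[s,2s)$, giving $\Theta(n/(s\log n))$ clusters.

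For the inductive step I would first invoke a Chernoff bound to argue that \textsc{ClusterActivate} produces the expected number of active clusters with high probability; this is legitimate because the loop condition $s\le\sqrt n/\log^2 n$ keeps that number $\omega(\log n)$. Next, since only a $\Theta(1/\log n)$ fraction of all nodes are clustered, a balls-and-bins argument over the $\Theta(s)$ \textsc{PUSH}es of each active cluster shows that in the first \textsc{ClusterPUSH}/\textsc{ClusterMerge} repetition each active cluster recruits $\Theta(s/\log n)$ distinct inactive clusters and thereby grows to $\Theta(s^2/\log n)=s_{\text{new}}$. The second repetition then serves only as a mop-up: every inactive cluster missed by the first still has size $\ge s\ge\log^3 n$ and is now reached $\Omega(\log n)$ times in expectation by the enlarged active clusters, so a Chernoff bound shows each such cluster is hit at least once and merged whp. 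Hence all $\Theta(n/\log n)$ clustered nodes survive, now inside clusters of size $\ge s_{\text{new}}$, which closes the induction on the clustering invariant.

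The message count is then easy: in each iteration only clustered nodes transmit, and each sends $O(1)$ messages per cluster primitive, so with $\Theta(n/\log n)$ clustered nodes and $\Theta(\log\log n)$ iterations the total is $O(n\log\log n/\log n)=o(n)$ messages. To upgrade this to the stated bit bound I must certify that every individual message carries only $\Theta(\log n)$ bits. \textsc{ClusterActivate}, \textsc{ClusterPUSH}($follow$) and \textsc{ClusterMerge} each transmit $O(1)$ IDs and are harmless; the sole danger is \textsc{ClusterResize}($s$), whose response to each follower lists $\floor{s'/s}$ new leader IDs and hence has size proportional to the ratio of the actual cluster size $s'$ to the target $s$ (cf.\ the footnote in \Cref{sec:primitives}).

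The main obstacle is exactly this size control. I must show that cluster sizes are predictable up to a constant factor, so that the clusters fed into \textsc{ClusterResize}($s_{\text{new}}$) have size $\Theta(s_{\text{new}})$ and each resize message names only $O(1)$ leaders. This is where the invariant that no cluster ever drops below $\Omega(\log^3 n)$ nodes becomes indispensable: a Chernoff bound over the $\Theta(s)\ge\log^3 n$ recruitment trials of a cluster---whose number of successes has mean $\Theta(s/\log n)=\Omega(\log^2 n)$---pins its realized growth to within a constant factor of the mean, and a union bound over the $\mathrm{poly}(n)$ clusters makes this hold simultaneously. Consequently $s'=\Theta(s_{\text{new}})$, \textsc{ClusterResize} splits into $O(1)$ pieces, every message stays at $\Theta(\log n)$ bits, and the total bit-complexity of \textsc{SquareClusters} is $o(n\log n)=O(nb)$. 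I expect this constant-factor concentration of the per-cluster growth---far finer than the crude ``grows by $\omega(s^{1.5})$'' estimate that suffices for bounding the number of rounds---to be the technically delicate point of the proof.
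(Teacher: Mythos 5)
Your proposal follows essentially the same route as the paper's own proof: an induction anchored at \Cref{lem:cl2init}, the first ClusterPUSH growing each active cluster to $\Theta(s^2/\log n)$ by a balls-and-bins count over its $\Theta(s)$ pushes, the second push as a with-high-probability mop-up (your expected-hit count of $\Omega(\log n)$ using $s \geq \log^3 n$ matches, and is if anything more carefully accounted than, the paper's coverage estimate), the $O(n)$ message bound from only the $\Theta(n/\log n)$ clustered nodes transmitting over the $O(\log \log n)$ iterations implied by $s' = \omega(s^{1.5})$, and the bit bound reduced to constant-factor predictability of cluster sizes so that ClusterResize($s$) emits only $\Theta(1)$-ID messages. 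The one point to tighten is that your explicit Chernoff concentration is stated only for the first repetition's recruitment trials; the paper additionally notes that in the second repetition inactive clusters join active clusters at random, so the same concentration (mean $\geq \log^3 n$ joiners per active cluster, plus a union bound) must also be invoked there to certify that post-merge sizes stay within a constant factor of the resize target.
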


\newcommand{\proofcltwomain}{
\begin{proof}\shortOnly{[Proof of \Cref{lem:cl2main}]}
We prove this by induction. The base case is already proved in \Cref{lem:cl2init}. Next, we show that each iteration of the loop increases the cluster size from $s$ to $s' = \Theta(\frac{s^2}{\log n})$ while keeping all clustered nodes clustered. In the first iteration each active cluster PUSHes to $\Theta(s)$ random nodes and hits with high probability $\Theta(s/\log n)$ inactive clusters (alone). Once these clusters merge, any active cluster therefore grows to $s' = \Theta(\frac{s^2}{\log n})$ as desired. The second PUSH, performed by the larger clusters which cover $\Theta(n/\log n)$ nodes, guarantees that no inactive cluster of size $s$ is left behind without merging to a new larger cluster of size at least $s'$. Furthermore, since inactive clusters randomly join the large active clusters during this second iteration no active cluster grows by more than a constant factor. The new clusters are therefore all of the desired size $s' = \Theta(\frac{s^2}{\log n})$ while
all clustered nodes remain clustered. Lastly, since $s > \log^3 n$ we have that $s' = \omega(s^{1.5})$ which shows that the number of iterations until the final cluster size of $\Omega(\frac{\sqrt{n}}{\log^2 n})$ is reached is at most $\Theta(\log \log n)$.

All in all only $\Theta(n / \log n)$ nodes are communicating in each of the $O(\log \log n)$ rounds which leads to the desired $O(n)$ messages-complexity. For the bit-size of these messages, the only thing to worry about is the calls to ClusterResize($s$). However, since the cluster sizes are, up to constants, exactly what they should be the ClusterResize($s$) produces only $\Theta(\log n)$ bit messages containing $\Theta(1)$ IDs.
\end{proof}
}\fullOnly{\proofcltwomain}

\noindent
\Cref{lem:mergeallclusters} applies without change to \clustertwo and we do not repeat it.

\begin{lemma}\label{lem:boundedpush}
Procedure {\sc BoundedPushCluster} expands the single large cluster to $\Theta(n)$ nodes in $O(\log\log n)$ rounds, with high probability, while at most $O(n)$ messages are sent.
\end{lemma}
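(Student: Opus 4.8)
The plan is to track the size $m_t$ of the single active cluster after round $t$ of the loop in {\sc BoundedClusterPush}, starting from $m_0 = \Theta(n/\log n)$ as guaranteed by {\sc MergeAllClusters}. I would argue that as long as the cluster still covers at most a suitable constant fraction of the network it grows by a factor of roughly $2$ (in any case strictly more than $1.1$) each round, so that it never deactivates prematurely and reaches a constant fraction of $n$ within $O(\log\log n)$ rounds; and that once it covers a large enough constant fraction its per-round growth falls below $1.1$, triggering the built-in deactivation and leaving $\Theta(n)$ unclustered nodes for the subsequent PULL phase.

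First I would compute the expected growth. In a ClusterPUSH each of the $m_t$ clustered nodes contacts an independent uniformly random node, so an unclustered node is recruited precisely when it receives at least one contact. By the standard balls-and-bins computation the expected number of newly recruited nodes is $(n-m_t)\bigl(1-(1-1/n)^{m_t}\bigr)$, giving an expected growth factor $g(m_t) = 1 + \frac{n-m_t}{m_t}\bigl(1-(1-1/n)^{m_t}\bigr)$. Writing $\alpha = m_t/n$, this is, up to lower-order terms, $1 + \frac{1-\alpha}{\alpha}(1-e^{-\alpha})$, a continuous strictly decreasing function of $\alpha$ that tends to $2$ as $\alpha\to 0$ and to $1$ at $\alpha = 1$; hence it crosses the threshold $1.1$ at a single constant $\alpha^\ast\in(0,1)$. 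This is exactly the constant-fraction analogue of \Cref{lem:cl2samp}, which already gives growth $2-\Theta(1/f(n))$ in the regime where only a $1/f(n)$ fraction is clustered.

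Next I would establish concentration. Because the cluster is always large, $m_t \ge m_0 = \omega(\log n)$, and the recruitment indicators of distinct unclustered nodes are negatively associated, a Chernoff/bounded-differences bound shows that the realized number of recruits -- and hence the measured growth factor computed by the ClusterSize call -- lies within a $(1\pm o(1))$ factor of $g(m_t)$ with high probability. A union bound over the $O(\log\log n)$ rounds makes this hold simultaneously in every round. Consequently, while $m_t \le \alpha^\ast n$ the measured growth exceeds $1.1$, the cluster stays active, and $m_{t+1}\ge 1.1\,m_t$; iterating from $m_0 = \Theta(n/\log n)$ the cluster reaches $\Theta(n)$ after $\log_{1.1}\Theta(\log n) = O(\log\log n)$ rounds. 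Once $m_t$ exceeds $\alpha^\ast n$ the measured growth drops below $1.1$ and the cluster deactivates, so the final clustered set has size $\Theta(n)$ and a complementary $\Theta(n)$ nodes remain unclustered, as required.

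Finally the message bound is immediate: only the (clustered) nodes of the active cluster transmit, each contributing $O(1)$ messages per round for the ClusterPUSH and the ClusterSize check. Summing $\sum_t O(m_t)$ over the active rounds, the geometric growth of $m_t$ makes the sum dominated by its last term $\Theta(n)$, giving $O(n)$ messages total. The main obstacle I anticipate is the concentration argument in the full-network regime: when $\alpha$ is a constant the recruitment events are genuinely dependent (the PUSH destinations are shared among all clustered nodes), so I must justify negative association or set up a Doob martingale with bounded differences to apply Azuma, and I must verify that the $1.1$ threshold is robustly separated from the growth values both for small $\alpha$ and for $\alpha$ near $1$, so that the high-probability fluctuation in the measured growth can cause neither premature deactivation nor over-recruitment down to $o(n)$ unclustered nodes.
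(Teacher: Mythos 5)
Your proposal follows essentially the same route as the paper's proof: exponential growth of the single cluster from $\Theta(n/\log n)$ to a constant fraction of $n$ in $O(\log\log n)$ rounds, self-detection of saturation via the measured growth threshold $1.1$ (the paper invokes \Cref{lem:cl2samp} for this; you make the growth function $g(\alpha)=1+\frac{1-\alpha}{\alpha}\bigl(1-e^{-\alpha}\bigr)$ and its crossing point $\alpha^\ast$ explicit), and the geometric-sum bound of $O(n)$ on messages. Your treatment is in fact more careful than the paper's sketch on the two points it glosses over --- concentration in the constant-fraction regime (via negative association of the recruitment indicators) and ruling out over-recruitment past the deactivation threshold --- so the argument is correct and complete.
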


\newcommand{\proofboundedpush}{
\begin{proof}\shortOnly{[Proof of \Cref{lem:boundedpush}]}
At the beginning of the procedure, we have from \Cref{lem:mergeallclusters} that $\Theta(n/\log n)$ belong to a single giant cluster. From here, we use a standard analysis of PUSH-PULL gossip: So long as at least $.1n$ nodes are unclustered, during each iteration the number of clustered nodes grows exponentially (and therefore for at most $O(\log \log n)$ rounds) until a large constant fraction of the nodes is clustered. By \Cref{lem:cl2samp}, the cluster detects this event successfully. Since the number of messages sent out in every round is exponentially increasing until it stops the total number of messages form a geometric sum which sums to $O(n)$ messages sent in total. 
\end{proof}
}\fullOnly{\proofboundedpush}

Putting these claims together, it follows that \clustertwo has a bit-complexity of $O(n \log n + nb) = O(nb)$ and message complexity of $O(1)$ messages per node on average while maintaining the optimal $O(\log \log n)$ round complexity of \clusterone. This proves \Cref{thm:main}.

\section{The $\Omega(\log \log n)$ Round Lower Bound}\label{sec:LB}

This section is dedicated to prove the $\Omega(\log \log n)$ round-complexity lower bound of \Cref{thm:LB}.

We first remark that the simple lower bound provided in \Cref{sec:boundedcommunication} as \Cref{lem:trivialLB} does not give any super constant number of rounds for the most interesting setting of $\Delta = n^{\Theta(1)}$. Here we show that the round-complexity of broadcasting a message has to be at least $0.99\log \log n$ with high probability. This holds for any $\Delta$ and remains true even if one ignores many restrictions of the model regarding message sizes, address-obliviousness, or the number of known nodes that can be contacted per round. The main implication of this lower bound is that both \clusterone and \clustertwo achieve the optimal asymptotic running time.

We need some notation: Instead of each node sampling (up to) one random contact in each round we assume that these nodes are sampled in advance. In particular, let $u_{v,t}$ be the uniformly random node given to node $v$ if it samples an node at time $t$. Based on these samples we define $G_t$ to be the union of all edges (node-pairs) that are potentially sampled at time $t$, that is, $G_t=(V,E_t)$ with $E_t = \{ \{u,v\} \ |  \ \exists v: \ u = u_{v,t} \}$.

\begin{lemma}\label{lem:knowledgegraph}
For every $t = 0,1,\ldots$ let $K_t$ be the graph indicating which node knows whom at the beginning of round $t$. It holds that:
$$K_0 = \emptyset,$$
$$K_{t+1} \subseteq (K_{t} \cup G_{t+1})^2,$$
and therefore also
$$K_{t} \subseteq \left(\bigcup_{i=1}^t G_i\right)^{{2}^{t}}.$$
Here $G^j$ is the graph that connects a node $u$ to any node for which there is a path of length at most $j$ in $G$.
\end{lemma}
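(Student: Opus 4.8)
The plan is to establish the two building-block relations and then read off the closed form by a clean induction. Throughout I interpret an (undirected) edge $\{u,v\} \in K_t$ as "$u$ knows $v$ or $v$ knows $u$ at the start of round $t$"; since the lemma is only a containment — an over-approximation of what anyone could possibly know — replacing the directed "knows" relation by its undirected version is harmless and only enlarges the right-hand side. The relation $K_0 = \emptyset$ is then exactly the model assumption that initially every node knows only its own ID. Since the final inequality $K_t \subseteq (\bigcup_{i=1}^t G_i)^{2^t}$ will follow mechanically from the recurrence $K_{t+1} \subseteq (K_t \cup G_{t+1})^2$, essentially all of the effort goes into that single-round recurrence.

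To prove $K_{t+1} \subseteq (K_t \cup G_{t+1})^2$, abbreviate $H = K_t \cup G_{t+1}$ and fix any edge $\{u,v\} \in K_{t+1}$; say $u$ knows $v$ at the end of round $t+1$. If $u$ already knew $v$ at the start of the round then $\{u,v\} \in K_t \subseteq H \subseteq H^2$ and we are done, so suppose $u$ learns $v$'s ID during round $t+1$. In this model the only way to acquire an ID is through a received message, so the ID of $v$ reaches $u$ inside a message from some node $w$ with which $u$ communicates that round — whether $u$ PULLs from $w$, is PUSHed to by $w$, and whether the contact uses direct addressing or a fresh random sample. In every such case the communicating pair is an edge of $H$: a direct-addressing contact requires $u$ to already know $w$ (so $\{u,w\} \in K_t$), while a random contact is precisely an edge $\{u,w\} \in G_{t+1}$ (either $w = u_{u,t+1}$ or $u = u_{w,t+1}$). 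Hence $\{u,w\} \in H$, and it remains only to locate $v$.

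The crucial point — and the step I expect to be the main obstacle — is that rounds are synchronous, so the message $w$ sends is determined by $w$'s state at the \emph{beginning} of the round; $w$ cannot relay an ID it itself only learns during round $t+1$. Consequently $w$ already knew $v$ at the start of the round, i.e. $\{w,v\} \in K_t \subseteq H$. Combining the two edges, $u, w, v$ is a path of length at most two in $H$, so $\{u,v\} \in H^2 = (K_t \cup G_{t+1})^2$, which proves the recurrence. (The subcase $w = v$, where $u$ simply learns the ID of the very node it talks to, is the length-one case already covered above.) It is exactly this "no intra-round forwarding" semantics that caps the per-round growth of the knowledge graph at a single squaring; permitting multi-hop propagation inside one round would invalidate the bound, so I would make sure to state the beginning-of-round message dependence explicitly.

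Finally I would derive the closed form by induction on $t$, relying on the elementary power-composition fact that a walk of length at most $b$ in $G^a$ unfolds into a walk of length at most $ab$ in $G$, that is $(G^a)^b \subseteq G^{ab}$. The base case $t=0$ is $K_0 = \emptyset$. For the step, assume $K_t \subseteq (\bigcup_{i=1}^t G_i)^{2^t}$ and set $W = \bigcup_{i=1}^{t+1} G_i$. By monotonicity of $G \mapsto G^j$ we have $(\bigcup_{i=1}^t G_i)^{2^t} \subseteq W^{2^t}$ and $G_{t+1} \subseteq W \subseteq W^{2^t}$, hence $K_t \cup G_{t+1} \subseteq W^{2^t}$; applying the recurrence followed by the composition identity yields $K_{t+1} \subseteq (W^{2^t})^2 \subseteq W^{2^{t+1}}$, completing the induction. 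The only remaining checks are the monotonicity of $G \mapsto G^j$ and the walk-unfolding identity, both of which are routine.
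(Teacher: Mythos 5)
Your proof is correct and follows essentially the same route as the paper's: both arguments observe that a contact in round $t+1$ travels along an edge of $K_t \cup G_{t+1}$ and can reveal only nodes already known to the contacted party at the start of the round, yielding the $2$-hop (squaring) bound, and both derive the closed form by induction using the elementary facts $G^j \cup G' \subseteq (G \cup G')^j$ and $(G^j)^{j'} = G^{jj'}$. Your version is merely more explicit than the paper's about the synchronous ``no intra-round forwarding'' semantics, which the paper uses implicitly when it caps learning at the $2$-hop neighborhood.
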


\newcommand{\proofknowledgegraph}{
\begin{proof}\shortOnly{[Proof of \Cref{lem:knowledgegraph}]}
The first equation is simply stating that nodes do not know each other initially. To see the second equation we note that there are two ways nodes can learn about new nodes. The first one is by contacting a random node. This only adds edges in $G_{t+1}$ to the knowledge graph $K_t$. By initiating a contact along an edge in $G_{t+1}$ or along an edge in $K_t$ corresponding to a node known from previous rounds, nodes can also indirectly learn about nodes known to nodes they know. Even if all such ``neighbors'' are contacted and every node responds with everything they know (which would result in many message of very large size), a nodes does not learn about more nodes than its $2$-hop neighborhood. This is exactly what is stated in the second equation. The last equation now follows from the other two by induction and taking into account that for any two graphs $G,G'$ and any $j,j'>1$ we have $G^i \cup G' \subseteq (G \cup G')^i$ and of course $(G^j)^{j'} = G^{jj'}$.
\end{proof}
}\fullOnly{\proofknowledgegraph}

We can now prove \Cref{thm:LB}:

\begin{theorem}[\Cref{thm:LB} restated]
Suppose we allow unlimited messages sizes, non address-oblivious communication, and have no restriction on how many communications a node can be involved in or even to how many known nodes a node contact in one round. Still, any algorithm running in less than $T = \log \log n - \log \log \log n - \omega(1)$ rounds fails to spread a message that starts at one node $u$ to all nodes with high probability.
\end{theorem}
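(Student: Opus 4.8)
The plan is to show that, no matter how cleverly the algorithm behaves, the set of informed nodes after $T$ rounds is contained in a small ball around the source $u$ in the combined sampling graph, and that this ball cannot cover all $n$ nodes once $T$ drops below the stated threshold. Write $G := \bigcup_{i=1}^{T} G_i$ for the union of all potential sampling edges and let $B_G(u,r)$ denote the set of nodes within distance $r$ of $u$ in $G$.

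\emph{Reachability reduction.} Let $I_t$ be the set of nodes holding the message at the start of round $t$, so $I_1=\{u\}$. In a single synchronous round the message travels at most one hop, and every contact a node makes runs either along a known edge (an edge of $K_t$) or along its one fresh random sample (an edge of $G_t$); this holds even under the relaxations of the theorem, since allowing unboundedly many known contacts only uses edges already present in $K_t$. Hence $I_{t+1}\subseteq N_{K_t\cup G_t}[I_t]$, the closed neighborhood. Combining this with \Cref{lem:knowledgegraph}, which gives $K_t\subseteq\big(\bigcup_{i\le t}G_i\big)^{2^t}\subseteq G^{2^t}$, I see that each round enlarges the $G$-distance from $u$ by at most $2^t$. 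Telescoping over $t=1,\dots,T$ yields
\[
 I_{T+1}\ \subseteq\ B_G\!\left(u,\ \sum_{t=1}^{T}2^{t}\right)\ \subseteq\ B_G\!\left(u,\ 2^{T+1}\right).
\]
This containment is deterministic in every realization of the algorithm's and the samples' coins, so a single bound handles all algorithms at once.

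\emph{The ball is small in expectation.} Set $R:=2^{T+1}$. I would bound $\mathbb{E}\,|B_G(u,R)|$ by a first-moment argument over walks: every node at distance $r$ from $u$ is the endpoint of some length-$r$ walk, so $|B_G(u,R)|\le 1+\sum_{r=1}^{R}W_r$, where $W_r$ counts length-$r$ walks from $u$ in $G$. For a fixed vertex sequence $(w_1,\dots,w_r)$, each consecutive pair is a $G$-edge only if in one of the $T$ rounds one endpoint sampled the other; since every sample is uniform and each edge has $2T$ possible (round, direction) witnesses, $\Pr[\{w_{j-1},w_j\}\in G]\le 2T/n$. Summing over the $\le n^{r}$ sequences gives $\mathbb{E}[W_r]\le (2T)^r$ and hence $\mathbb{E}\,|B_G(u,R)|\le 2(2T)^R$. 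Working with walks rather than a BFS sidesteps the fact that $G$ has no degree bound (a node chosen by unusually many others can have large degree): the first moment automatically averages these rare high-degree vertices out. Plugging in $T\le \log\log n-\log\log\log n-\omega(1)$ gives $2^{T}\le \tfrac{\log n}{\log\log n}\,2^{-\omega(1)}$, so $R=o\!\big(\log n/\log\log n\big)$ while $\log(2T)=O(\log\log\log n)$, and therefore
\[
 \log \mathbb{E}\,|B_G(u,R)|\ \le\ R\,\log(2T)+O(1)\ =\ o(\log n),
\]
i.e. $\mathbb{E}\,|B_G(u,R)|=n^{o(1)}$. By Markov's inequality $\Pr[\,|B_G(u,R)|\ge n\,]\le n^{-1+o(1)}$, and since $I_{T+1}\subseteq B_G(u,R)$ the algorithm informs all $n$ nodes with probability at most $n^{-1+o(1)}$; equivalently it fails with high probability.

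The main obstacle I anticipate is making the first-moment walk bound fully rigorous despite the dependencies among edges: consecutive edges of a walk share a vertex, and a single sample $u_{x,i}$ could in principle witness two different edges, so the naive product $\prod\Pr[\text{edge}]$ is not immediate. I would handle this by a union bound over witness assignments (charging each edge to one of its $2T$ candidate samples), observing that genuinely reused samples force backtracking walks whose contribution is lower order and can be folded into the constants, while degenerate cases (self-loops, repeated vertices) only shrink the count. A secondary, purely bookkeeping point is the exact radius: the additive shifts in the exponent of $2^{T+1}$ and the constant in $2(2T)^R$ perturb only lower-order terms of the threshold, which is precisely why the slack $-\log\log\log n-\omega(1)$ is built into the statement.
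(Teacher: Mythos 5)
Your proposal shares the paper's skeleton---both arguments rest on \Cref{lem:knowledgegraph} and on the observation that after $T$ rounds everything is confined to a ball of radius $2^{T+O(1)}$ around $u$ in the union graph $G=\bigcup_{i\le T}G_i$---but the two halves are executed differently. Your reachability reduction is in fact a little cleaner than the paper's: you track the informed set directly and obtain $I_{T+1}\subseteq B_G\bigl(u,2^{T+1}\bigr)$, whereas the paper argues indirectly that success would let $u$'s ID be attached to the rumor, making the knowledge graph complete within two extra rounds and forcing $K'=\bigcup_i G_i$ to have diameter at most $2^{T+O(1)}$; the conclusions are equivalent up to constants in the radius, which the $\omega(1)$ slack absorbs, as you note. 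Where you genuinely diverge is in showing the ball is small: the paper proves the maximum degree of $G$ is $O(\log n)$ with probability $1-n^{-\omega(1)}$ and then uses the deterministic fact that in a graph of maximum degree $\Theta(\log n)$ a ball of radius $o(\log n/\log\log n)$ contains only $n^{o(1)}$ vertices; you instead bound $\mathbb{E}\,|B_G(u,R)|$ by a first-moment count and apply Markov. The dependency obstacle you flag is real but easily dispatched, and more simply than by your backtracking-walk bookkeeping: count self-avoiding paths rather than walks (every node of $B_G(u,R)$ is the endpoint of one, so $|B_G(u,R)|\le 1+\sum_{r\le R}P_r$), observe that a single sample $u_{x,i}$ can witness only the one edge $\{x,u_{x,i}\}$, so the $r$ distinct edges of a path require $r$ distinct witnessing samples, and a union bound over the $(2T)^r$ witness assignments, each realized with probability $n^{-r}$ by independence of distinct samples, gives $\mathbb{E}[P_r]\le(2T)^r$ exactly as you want.

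The one substantive cost of your route is quantitative: Markov at threshold $n$ caps the success probability at $n^{-1+o(1)}$, i.e.\ failure probability $1-n^{-1+o(1)}$, which falls short of the paper's stated w.h.p.\ convention (probability $1-n^{-C}$ for a large constant $C>1$); the paper's degree-based argument yields the stronger $1-n^{-\omega(1)}$. If you want full strength, condition on the max-degree event as the paper does: on an event of probability $1-n^{-\omega(1)}$ every degree in $G$ is $O(\log n)$, whence deterministically $|B_G(u,R)|\le (O(\log n))^{R}=n^{o(1)}<n$ for $R=2^{T+1}=o(\log n/\log\log n)$, and your containment $I_{T+1}\subseteq B_G(u,R)$ finishes the proof without any appeal to Markov.
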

\begin{proof}
First we note that spreading a message starting at one node $u$ can be used to elect $u$ as a cluster leader by simply attaching its ID to the message spread. This could then be used to make $K_T$ the complete graph by informing all nodes about each other in at most two additional rounds, at least if the message size is unbounded. We show that this is, with high probability, impossible in less than $T$ rounds by using \Cref{lem:knowledgegraph}. In particular, it certainly holds that $K_T \subseteq {K'}^{{2}^{T}}$ where $K' = (\bigcup_{i=1}^T G_i)$. Note that $K'$ is simply a random graph that is created by each node sampling $T$ neighbors independently and adding an undirected edge to those neighbors. It is clear that $K_T$ is a complete graph if and only if the diameter of $K'$ is at most $2^T$. It is easy to see that for $T = \log \log n - \log \log \log n - \omega(1)$ the probability that such a random graph has diameter $o(\log n / \log \log n)$ is extremely small which completes the proof. An easy way to get such a high probability result is to say that the probability that there is a node in $K'$ with degree $\Delta = \omega(\log n)$ is at most $n^{-\omega(1)}$ and any graph with maximum degree $\Theta(\log n)$ necessarily has a diameter of $\Omega(\log n / \log \log n)$.
\end{proof}

\newcommand{\sectionBoundedCommunication}{
\section{Bounded Communications}\label{sec:boundedcommunication}

In this section we discuss and prove tradeoffs between the running time of any gossip algorithm and the maximum number of communications a node can be involved in per round.

In the standard gossip model which is also employed in this paper, each node can only initiate one PUSH or PULL communication per round. 
Still, a node might be involved in multiple communications initiated by other nodes. The possibility of this happening is present in the models of \emph{all} prior gossip works (discussed in \Cref{sec:intro}). In few cases, in particular the random phone call model with an underlying complete graph, having a large number of communications per round in one node is not crucial. For example, the running time of the uniformly random gossip on the complete graph remains $O(\log n)$ even if each node only accepts one communication per round (but the result of \cite{pittel1987spreading} which concentrates on the leading constants and lower order terms change). For most gossip works however such a concentration of communications is unavoidable. For example the works \cite{GiakkoupisS2012vertex,Giakkoupis2011,NC} which generalize the random phone call model to arbitrary topologies show that gossip is fast on any topology with good vertex or edge expansion. In particular, they imply that gossip completes in $\log^{O(1)} n$ rounds on a star graph because the star graph has a constant edge expansion $\Phi =O(1)$ and constant vertex expansion $\alpha =O(1)$. This is obviously not possible without the center node talking to at least $\Delta = n / \log^{O(1)} n$ nodes per round on average. Similarly, all results that use direct addressing crucially rely on each node being able to respond to many nodes and many of them can have one node communicate with (almost) all other nodes in the network in one round. This is also the case for \clusterone and \clustertwo. In fact our entire set of cluster-operations are based on having a cluster-center coordinating $\omega(1)$ cluster-followers in a single communication step.

While in some settings having $\Delta = n$ may be acceptable, in other settings one would like to have $\Delta$ to be not much more than a small polynomial or even (significantly) less. In what follows we explicitly address this parameter and design optimal algorithms for any $\Delta$.

We first remark that having a super-constant $\Delta$ is unavoidable if one wants to achieve a sub-logarithmic round complexity. In particular, it is easy to see that any global communication with the restriction that no node responds to more than $\Delta$ requests per round takes at least $\log n / \log \Delta$ rounds:

\begin{lemma}\label{lem:trivialLB}
Any algorithm in which no node participates in more than $\Delta$ communications per round requires at least $\log n / \log \Delta$ rounds to inform all nodes of a message.
\end{lemma}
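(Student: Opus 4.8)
The plan is to track how quickly the set of informed nodes can possibly grow and argue that, regardless of the algorithm and any randomness, this set can multiply by at most roughly $\Delta$ per round; since we start from a single informed node, it must then take at least about $\log_\Delta n$ rounds before the informed set can reach size $n$. Concretely, I would let $I_t$ denote the number of informed nodes at the end of round $t$, with $I_0 = 1$ since initially only the source knows the message.

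The heart of the argument is the claim that $I_{t+1} \le (\Delta+1)\, I_t$. To see this, observe that in the synchronous model a node can become newly informed during round $t+1$ only by participating in a communication with a node that already knew the message at the start of that round, i.e.\ with a node counted in $I_t$ (a node learns the message only from someone who already has it, and intra-round chaining is excluded because all transmissions in a round act on the state at its beginning). I would then charge each newly informed node to one such already-informed communication partner. Because every node participates in at most $\Delta$ communications per round, each already-informed node can be charged at most $\Delta$ times, so the number of newly informed nodes is at most $\Delta\, I_t$. This gives
\[
I_{t+1} \le I_t + \Delta\, I_t = (\Delta+1)\, I_t .
\]
Iterating from the base case yields $I_t \le (\Delta+1)^t$ for all $t$, and crucially this inequality holds deterministically for every realization of the algorithm's random choices, so randomization cannot help.

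To finish, I would note that informing all $n$ nodes requires $I_t = n$, which forces $(\Delta+1)^t \ge n$ and hence $t \ge \frac{\log n}{\log(\Delta+1)} = \Omega\!\left(\frac{\log n}{\log \Delta}\right)$, matching the stated bound. I do not expect a genuine obstacle: the argument is purely combinatorial and uses no concentration. The only points needing a little care are (i) making the charging airtight---in particular using that a communication counts against the budget of \emph{both} endpoints, so that a single already-informed node cannot seed more than $\Delta$ fresh nodes in one round, and that a newly informed node is always charged to a node in $I_t$ rather than to another node informed in the same round---and (ii) the cosmetic gap between $\Delta$ and $\Delta+1$, which is immaterial for the asymptotic claim (for $\Delta \ge 2$ we have $\log(\Delta+1) = \Theta(\log \Delta)$) and in any case only reinforces the intended conclusion that a super-constant $\Delta$ is necessary for any sub-logarithmic round complexity.
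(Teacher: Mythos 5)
Your proof is correct and takes essentially the same approach as the paper's: both bound the multiplicative growth of the informed set by (roughly) a factor of $\Delta$ per round and iterate from a single informed source to force $t \geq \log n / \log \Delta$. Your explicit charging argument and the $(\Delta+1)$ bookkeeping merely make airtight the paper's one-line induction (which states growth by at most a $\Delta$ factor and $I_r \leq \Delta^{r-1}$); the difference is cosmetic.
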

\begin{proof}
Since each informed node can participate in at most $\Delta$ communications the number of informed nodes cannot grow by more than a $\Delta$ factor per round. Inductively the the number of informed nodes in round $r$ is therefore at most $\Delta^{r-1}$ and $\log n / \Delta$ rounds are needed to bring this quantity to $n$.
\end{proof}

In the remainder of this section we will show that the algorithmic ideas developed in this paper can be used to achieve any point on the this tradeoff curve between $\Delta$ and the required number of rounds while also maintaining the optimal message- and bit-complexity. We do this in two steps. We first define a $\Delta$-clustering and show that any such clustering easily allows $\Delta$-bounded communications in the desired $\log n / \log \Delta$ round complexity by running simple and robust gossip algorithms on top of this clustering. We then show that a $\Delta$-clustering can be computed in $\Theta(\log \log n)$ rounds:

\begin{definition}
A $\Delta$-clustering is a clustering in which every node is clustered and all clusters have a size between $\Delta$ and $2\Delta$.
\end{definition}

We will show in \Cref{lem:clusterPushPull} that such a $\Delta$-clustering indeed allows to efficiently spread information in the desired $O(\log n / \log \Delta)$ rounds using the ClusterPUSH-PULL protocol given in \Cref{alg:clusterPushPull}. For sake of simplicity we focus on the case $\Delta = \log^{\omega(1)} n$ since for smaller $\Delta$ the improvement of $\log \Delta$ is at most $O(\log \log n)$ and therefore essentially negligible. Similarly, in this section we assume for convenience that $\Delta < n^{0.9}$ as otherwise one can might as well set $\Delta = n$ and run \clustertwo. 

\begin{algorithm}[htb!]
\caption{ClusterPUSH-PULL($\Delta$)}
\begin{algorithmic}[1]

\State Compute $\Delta$-Clustering
\State ClusterShare(message)

\Statex

\For{$\Theta(\frac{\log n}{\log \Delta})$ iterations}			\label{delta:loop}		
	\State newly informed clusters: ClusterPUSH(message), ClusterShare(message) \label{delta:push}
\EndFor
	
\Statex

\State uninformed nodes: PULL message from a random node		\label{delta:pull}			\Comment{ClusterPULL}
\State ClusterShare(message)	\label{delta:share}

\end{algorithmic}
\label{alg:clusterPushPull}
\end{algorithm}

\begin{lemma}\label{lem:clusterPushPull}
For any $\Delta = \log ^{\omega(1)} n$ the ClusterPUSH-PULL algorithm given in \Cref{alg:clusterPushPull} spreads a message to all nodes, with high probability. It furthermore uses only $O(\frac{\log n}{\log \Delta})$ rounds and $O(n)$ messages once the $\Delta$-clustering is computed.
\end{lemma}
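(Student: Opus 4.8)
The plan is to track the number of \emph{informed clusters} (clusters all of whose nodes know the message) and to show this count grows by a factor of $\Theta(\Delta)$ per iteration of the PUSH loop until a constant fraction of clusters is informed, after which a single PULL round together with the closing ClusterShare reaches everyone. Throughout I use that a $\Delta$-clustering has $K = \Theta(n/\Delta)$ clusters, each of size between $\Delta$ and $2\Delta$, and that $\Delta = \log^{\omega(1)} n$ in particular gives $\Delta = \omega(\log n)$; the standing assumption $\Delta < n^{0.9}$ keeps $K \geq n^{0.1}$ large enough for concentration.

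First I would analyze the PUSH phase as a balls-into-bins process. Let $N_t$ be the number of clusters newly informed at the end of iteration $t$, with $N_0 = 1$ after the initial ClusterShare. In iteration $t$ the $N_{t-1}$ newly-informed clusters each have $\geq \Delta$ nodes PUSHing to a uniformly random node, so $m_t = \Theta(\Delta N_{t-1})$ balls are thrown into the $K$ clusters (bins), and the subsequent ClusterShare promotes every hit cluster to fully informed. By induction, as long as the total number of informed clusters satisfies $I_{t-1} = O(N_{t-1}) \ll K$ (which holds during exponential growth, where the geometric sum is dominated by its last term) and $m_t \leq K/2$, a standard balls-into-bins estimate like the one used in \Cref{lem:cl2samp} yields $\Theta(m_t) = \Theta(\Delta N_{t-1})$ distinct newly-hit clusters, only an $O(I_{t-1}/K) = o(1)$ fraction of which were already informed. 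Since the count of distinct hit clusters is a $1$-Lipschitz function of $m_t = \omega(\log n)$ independent choices, a McDiarmid/Azuma bound concentrates it around its mean with failure probability $n^{-\omega(1)}$; a union bound over the $O(\log n/\log\Delta)$ iterations keeps all steps on track with high probability. Hence $N_t = \Theta(\Delta^t)$, so after $O(\log n/\log\Delta)$ iterations we reach $N_{t} = \Theta(n/\Delta^2)$, at which point one further iteration throws $\Theta(K)$ balls and informs a constant fraction of all clusters, hence a constant fraction $q > 0$ of all \emph{nodes} (sizes differing by at most a factor $2$).

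Next I would finish via the PULL round and verify the complexities. After the PUSH phase a constant fraction $q$ of nodes are informed, so each node of a still-uninformed cluster PULLs an informed node with probability $\geq q$. A fixed uninformed cluster, having $\geq \Delta$ nodes, fails to inform \emph{any} of them with probability at most $(1-q)^{\Delta} = e^{-\Omega(\Delta)} = n^{-\omega(1)}$, using exactly $\Delta = \omega(\log n)$; a union bound over the $K \leq n$ clusters shows that w.h.p. every remaining cluster has at least one informed node after the PULL, and the final ClusterShare then spreads the message to all its nodes. For round complexity, the loop runs $\Theta(\log n/\log\Delta)$ iterations and every cluster primitive costs $O(1)$ rounds, so with the $O(1)$ surrounding ClusterShares and the single PULL the total is $O(\log n/\log\Delta)$. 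For messages, iteration $t$ sends $\Theta(\Delta N_{t-1})$ PUSHes and $\Theta(\Delta N_t)$ ClusterShare messages; because the $N_t$ grow geometrically this sum is dominated by the saturating iteration, which sends $\Theta(\Delta\cdot n/\Delta^2)=\Theta(n/\Delta)$ PUSHes and informs $\Theta(K)$ clusters at ClusterShare cost $\Theta(K\Delta)=\Theta(n)$, while the PULL and closing ClusterShare also cost $O(n)$, for $O(n)$ messages overall.

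I expect the main obstacle to be making the factor-$\Theta(\Delta)$ growth of the PUSH phase rigorous at \emph{every} step simultaneously. This requires carefully controlling collisions both among the fresh throws and against the already-informed clusters, and handling the transition from the exponential-growth regime ($m_t \ll K$) to the saturation regime ($m_t = \Theta(K)$) so that a constant fraction of clusters—rather than merely an $\Theta(1/\Delta)$ fraction—is informed before the PULL. The balls-and-bins estimates and sampling arguments already developed for \clustertwo (notably \Cref{lem:cl2samp}) should provide the template for these concentration bounds.
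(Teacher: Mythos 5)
Your proposal is correct and follows essentially the same route as the paper's proof sketch: exponential growth of the informed set by a factor of $\Theta(\Delta)$ per ClusterPUSH/ClusterShare iteration, followed by a single PULL round whose success is amplified by the $\geq \Delta = \omega(\log n)$ nodes of each uninformed cluster, with the round count coming from the loop length and the message count from a geometric sum. One place where you genuinely improve on the paper is the handoff between the two phases: the paper's sketch stops the push phase at $n/\Delta$ informed \emph{nodes} and then asserts that every uninformed cluster succeeds at the PULL ``with high probability'' because each of its $\geq \Delta$ members has an independent chance of $1/\Delta$ of pulling from an informed node --- but as literally stated this gives a per-cluster failure probability of roughly $(1-1/\Delta)^{\Delta} = \Theta(1)$, a constant. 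Your version runs one extra saturating iteration so that a constant fraction $q$ of all nodes is informed before the PULL, making the per-cluster failure probability $(1-q)^{\Delta} = n^{-\omega(1)}$ and the union bound over the $K \leq n$ clusters valid; this is the clean repair of that looseness, and it is consistent with the algorithm since the loop constant in $\Theta(\frac{\log n}{\log \Delta})$ is adjustable. Your explicit cluster-level balls-into-bins formulation, the McDiarmid/Azuma concentration for the number of distinct hit clusters (using $m_t = \omega(\log n)$ and $K \geq n^{0.1}$), and the per-iteration message accounting dominated by the saturating step are all more rigorous than the paper's sketch but faithfully implement its intent, in the same spirit as \Cref{lem:cl2samp}.
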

\begin{proof}[Proof Sketch. ]
We first address the round and message complexity and then prove correctness:

The claimed round complexity of $O(\frac{\log n}{\log \Delta})$ is easily verified since the only loop of the algorithm in Line~\ref{delta:loop} takes $3$ rounds per iteration and runs for $\Theta(\frac{\log n}{\log \Delta})$ iterations. All other instructions take only $O(1)$ rounds. It is also easy to see that each node processes at most $O(1)$ messages over the course of the algorithm.

Next we prove correctness, that is, that ClusterPUSH-PULL spreads any message to all nodes with high probability.
The first ClusterShare (Line~\ref{delta:push}) spreads the message to $\Delta = \omega(\log n)$ nodes. From there on, so long as at most $n/\Omega(\Delta)$ nodes are informed, each ClusterPUSH and ClusterShare in the main loop (of Line~\ref{delta:loop}) increases the number of  informed nodes with high probability by a $\Delta/O(1)$ factor. Hence, the set of informed nodes grows exponentially until it reaches $n/\Delta > (\Delta/O(1))^{\Theta(\frac{\log n}{\log \Delta})}$ nodes.

To see that this suffices for the ClusterPULL/ClusterShare in Lines \ref{delta:pull}--\ref{delta:share} to inform all remaining nodes with high probability we focus on one uninformed node. This node gets informed during the ClusterPULL if at least one of the nodes in its cluster PULLs the message from a node in Line~\ref{delta:pull}. Since there are at least $\Delta = \log  ^{\omega(1)} n$ such nodes of which each has an independent chance of $1/\Delta$ to PULL from an informed node this happens with high probability.

This shows that ClusterPUSH-PULL indeed informs all nodes about the message with high probability using only $O(\frac{\log n}{\log \Delta})$ rounds and $O(n)$ messages.
\end{proof}

What remains to show is that a $\Theta(\Delta)$-clustering can be computed efficiently. The process for generating the clustering is essentially a generalization of the \clustertwo\ algorithm, where we stop growing clusters at around size $\Delta$. Then for additional $O(\log\log n)$ rounds, we join all unclustered nodes to existing clusters. All the while, we never connect any leader to more than $\Delta$ followers, nor transmit more than $O(n)$ messages. The detailed pseudo-code description of the resulting algorithm \clusterthree($\Delta$) is given in \Cref{alg:cluster3}.

\begin{algorithm}[htb!]
\caption{\clusterthree($\Delta$) with $\Delta = \log^{\omega(1)} n$}
\begin{algorithmic}[1]

\State \Call{GrowInitialClusters}{}  \Comment{as in \Cref{alg:cluster2}}
\State \Call{SquareClusters}{to size $s \geq \sqrt{\Delta \log n}/C''$} \Comment{as in \Cref{alg:cluster2} until $s \geq \sqrt{\Delta \log n}/C''$}
\State \Call{MergeClusters}{} \Comment{merge to size $\Theta(\Delta/C'')$}
\State \Call{BoundedClusterPush}{} \Comment{similar to \Cref{alg:cluster1} but with continuous ClusterResize}
\State \Call{UnclusteredNodesPull}{} \Comment{as in \Cref{alg:cluster1}}
\State ClusterResize($\Delta / C''$)

\Statex

\Procedure{MergeClusters}{} 
\State ClusterActivate($10 \frac{s}{\Delta  / C''}$) 
\State active clusters: ClusterPUSH(clusterID)
\State inactive clusters: ClusterMerge(random received ID) 
\EndProcedure

\Statex
\Procedure{BoundedClusterPush}{} 
\State ClusterActivate(1)
\For {$\Theta(\log \log n)$ rounds}
                \State ClusterResize($\Delta / C''$)
                \State ClusterPUSH($follow$)
                \State unclustered nodes: set $follow$ to any received ID (if any)
                \State ClusterSize
                \If{cluster grows by less than a factor of $1.1$}
									\State deactivate cluster
                \EndIf
\EndFor 
\EndProcedure

\end{algorithmic}
\label{alg:cluster3}
\end{algorithm}

\begin{theorem}
For any given $\Delta = \log^{\omega(1)} n$ the algorithm \clusterthree($\Delta$), with high probability, computes a $\Theta(\Delta)$-clustering using $O(\log \log n)$-rounds and $O(n)$ messages while no node communicates with more than $\Delta$ nodes in any round.
\end{theorem}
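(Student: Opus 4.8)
The plan is to mirror the correctness proof of \clustertwo\ (\Cref{thm:main}) phase by phase for \clusterthree, tracking four invariants simultaneously: every node is eventually clustered, all clusters have size $\Theta(\Delta)$, the running time is $O(\log\log n)$, and the message count is $O(n)$ --- while additionally certifying that no node ever coordinates with, sends to, or receives from more than $\Delta$ others in a single round. Since {\sc GrowInitialClusters} and {\sc SquareClusters} are run verbatim from \clustertwo, \Cref{lem:cl2init} and \Cref{lem:cl2main} already give that afterwards $\Theta(n/\log n)$ nodes sit in clusters of size up to $s=\sqrt{\Delta\log n}/C''$, using $O(\log\log n)$ rounds and $O(n)$ messages. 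The only new observations needed here are that every cluster size stays below $2\sqrt{\Delta\log n}/C''=o(\Delta)$, so the $\Delta$-bounded constraint is trivially met, and that the doubly-exponential growth reaches $s=\sqrt{\Delta\log n}/C''$ in $O(\log\log n)$ iterations precisely because $\Delta\le n^{0.9}$ forces $\log\log\!\big(\sqrt{\Delta\log n}\big)=O(\log\log n)$. This calibrated stopping point is what lets a single later merge step jump to $\Theta(\Delta)$ rather than growing incrementally through every size, which for super-polylogarithmic $\Delta$ would blow the round budget.

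Next I would analyze {\sc MergeClusters}, which is essentially one more squaring step tuned to land at $\Theta(\Delta)$. With $\Theta(n/\log n)$ nodes clustered, each of the $s$ PUSHes of an active cluster hits a clustered node with probability $\Theta(1/\log n)$, so an active cluster hits $\Theta(s/\log n)$ clustered nodes, which a balls-and-bins argument places in $\Theta(s/\log n)$ distinct inactive clusters; since $s^2=\Theta(\Delta\log n)$, merging these absorbs $\Theta(s/\log n)$ clusters of size $s$ and grows the active cluster to $\Theta(s^2/\log n)=\Theta(\Delta/(C'')^2)$. The concentration follows from a Chernoff bound because $s/\log n=\Omega(\log n)$ (using $\Delta=\log^{\omega(1)}n\ge\log^3 n$), exactly as in \Cref{lem:cl2samp}. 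The subtlety is that a single PUSH round leaves a constant fraction of inactive clusters unhit --- the expected number of incoming PUSHes per inactive cluster is only $\Theta(1)$ --- so $O(n/\log n)$ nodes remain in clusters of size $s=o(\Delta)$. These residual small clusters can be dissolved back to the unclustered pool (e.g.\ by a ClusterDissolve below the target size); this changes nothing asymptotically, since $\Theta(n)$ nodes were already unclustered, and it restores the clean invariant that every node is now either in a cluster of size $\Theta(\Delta)$ or unclustered.

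I would then handle {\sc BoundedClusterPush}, {\sc UnclusteredNodesPull}, and the final ClusterResize together. The \emph{continuous} ClusterResize($\Delta/C''$) at the top of each loop iteration is crucial: it splits any cluster exceeding $2\Delta/C''$ into pieces of size in $[\Delta/C'',2\Delta/C'')$ before any leader coordination occurs, pinning sizes to $\Theta(\Delta/C'')$ throughout. The set of clustered nodes then grows by a factor bounded away from $1$ each round --- at least the $1.1$ threshold until a constant fraction is clustered, by \Cref{lem:cl2samp} --- so starting from the $\Theta(n/\log n)$ already clustered it reaches a constant fraction in $O(\log\log n)$ rounds, at which point the deactivation test halts each cluster. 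Because the number of transmitting nodes at most doubles each round until it stops, the messages form a geometric series summing to $O(n)$. The remaining $\Theta(n)$ unclustered nodes are cleaned up by the PULL gossip of {\sc UnclusteredNodesPull} in $O(\log\log n)$ rounds and $O(n)$ messages exactly as in \Cref{lem:pull}. A Chernoff bound over the $\Delta=\log^{\omega(1)}n$ independent chances each cluster has to gain joiners shows every cluster ends at size $\Theta(\Delta/C'')$, so the final ClusterResize($\Delta/C''$) leaves all clusters in $[\Delta/C'',2\Delta/C'')$ with every node clustered --- a $\Theta(\Delta)$-clustering.

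The hard part, and the reason for all the size bookkeeping, is certifying the $\Delta$-bounded communication constraint \emph{simultaneously} with the size-$\ge\Delta$ lower bound. The plan is to choose $C''$ large enough that a cluster, resized to at most $2\Delta/C''$ and able to at most double before the next resize, never forces its leader to coordinate with more than $4\Delta/C''\le\Delta$ followers in any round; the only rounds without leader coordination are the PULL rounds, where the relevant quantity is the number of nodes pulling one fixed node, i.e.\ the maximum load of $\Theta(n)$ balls in $n$ bins, namely $O(\log n/\log\log n)=o(\Delta)$ with high probability. Verifying that every ClusterResize keeps its pieces at or above $\Delta/C''$ (so the lower bound is never violated once reached), while the clusters emerging from {\sc MergeClusters} at size $\Theta(\Delta/(C'')^2)$ climb back above $\Delta/C''$ within $O(1)$ push rounds, is the most delicate part of the invariant maintenance and is where I would spend the most care.
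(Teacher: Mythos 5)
Your overall architecture is the paper's: reuse \Cref{lem:cl2init} and \Cref{lem:cl2main} for \textsc{GrowInitialClusters} and \textsc{SquareClusters}, treat \textsc{MergeClusters} as a single tuned jump to size $\Theta(\Delta)$, then handle \textsc{BoundedClusterPush} via the growth-threshold deactivation and a geometric message sum, with the continuous ClusterResize($\Delta/C''$) plus at-most-doubling-per-round certifying the $\Delta$-communication bound, and a final resize pinning sizes to a factor of two. The genuine gap is in \textsc{MergeClusters}. You assume $s^2=\Theta(\Delta\log n)$, but the stopping rule of \textsc{SquareClusters} only guarantees $s\ge\sqrt{\Delta\log n}/C''$; since each iteration maps $s\mapsto\Theta(s^2/\log n)$, the loop can exit with $s$ as large as $\Theta(\Delta/(C'')^2)$. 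Your push-centric accounting --- each active cluster absorbs the $\Theta(s/\log n)$ inactive clusters its own pushes hit --- then permits absorbed mass up to $\Theta(s^2/\log n)$, which in the overshoot regime is $\omega(\Delta)$, breaking both the size upper bound and the $\Delta$-bounded communication you need. The paper avoids this with exactly the two ingredients you dropped: the activation probability is the $s$-adaptive $10\frac{s}{\Delta/C''}$, and --- flagged in the paper as ``importantly, and differently to other merging phases'' --- inactive clusters merge to a \emph{uniformly random} received ID, so a Chernoff-type bound shows the inactive clusters are split \emph{evenly} among the active clusters that hit them, giving each active cluster mass $\approx\Delta/(10C'')$ \emph{independently of} $s$.

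Relatedly, you concede that a constant fraction of inactive clusters receives no push and patch this by inserting a ClusterDissolve that is not part of \clusterthree as given in \Cref{alg:cluster3}; the paper instead argues that every inactive cluster is pushed by multiple active clusters and merges (note its ClusterMerge(random received ID) has no ``(if any)'' fallback), so you are proving a modified algorithm rather than the stated one. To your credit, your expectation calculation (about $10/C''$ incoming pushes per inactive cluster at the threshold $s=\sqrt{\Delta\log n}/C''$) is a legitimate probe of the paper's sketch at the boundary, and your dissolve-and-reabsorb variant would plausibly work since the freed nodes rejoin during \textsc{BoundedClusterPush}/\textsc{UnclusteredNodesPull} without affecting the round or message budgets; but as written your argument neither analyzes the algorithm in the statement nor recovers the size upper bound when $s$ overshoots. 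The remaining phases of your proposal (the $1.1$-growth deactivation with geometric message sum, the $O(\log n/\log\log n)=o(\Delta)$ maximum PULL load, the final ClusterResize, and the constant-round recovery from $\Theta(\Delta/(C'')^2)$ back above $\Delta/C''$) track the paper's proof correctly.
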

\begin{proof}[Proof Sketch.]
The fact that Procedures {\sc GrowInitialClusters} and {\sc SquareClusters} result in $\Theta(n/\log n)$ clustered nodes grouped into 
$\Theta(s)$ sized clusters with $s = \Omega(\sqrt{\Delta \log n}/C'')$ follows the same way as for \clustertwo, and we do not repeat the arguments here.

Procedure {\sc MergeClusters} of \clusterthree is a bit different from \clustertwo, though the techniques are similar. It activates every cluster with probability $10 \frac{s}{\Delta  / C''}$ followed by a ClusterPUSH of the active clusters. Since $\Delta$ is small enough a Chernoff bound shows that roughly a $\frac{10C''}{\Delta}$ fraction of clusters is activated. Since $\Delta = \log^{\omega(1)} n$ each inactive cluster gets furthermore PUSHed by multiple active clusters to merge. Importantly, and differently to other merging phases, inactive clusters choose one to merge with among these clusters uniformly at random. A Chernoff type-bound shows that this way inactive clusters get assigned to active clusters evenly, i.e., each active cluster gets chosen by $\Theta(\frac{\Delta  / C''}{s})$ inactive clusters and will therefore grow to size about $\Delta / 10 C''$.

Subsequently, these clusters then recruit unclustered nodes in the {\sc BoundedClusterPush} call. After each recruiting step clusters measure whether they grew by at least a factor of $1.1$ and resize. As before this guarantees that all but a small constant fraction of the nodes joins a cluster while the number of messages is $O(n)$. Resizing active clusters guarantees that no cluster gets too big and no leader has to communicate too much.

As a last step, we run {\sc UnclusteredNodesPull} on the unclustered nodes and make them join the first cluster they PULL from. Up to constants every unclustered node joins a uniformly random cluster. Even without any size control every cluster grows by at most a small constant in this round. The final ClusterResize($\Delta/C''$) ensures that all clusters have the same size (up to a factor of two) as desired by a $\Theta(\Delta)$-clustering.

\end{proof}
}\fullOnly{\sectionBoundedCommunication}

\newcommand{\sectionFaultTolerance}{
\section{Fault Tolerance}\label{sec:faults}

In this section we discuss the robustness of our algorithms against random node failures.

As already discussed in the introduction, fault tolerance is one of the key advantages of the gossip paradigm. In contrast to many more structured information dissemination protocols, most gossip algorithms naturally continue working or degrade nicely under link and node failures. The way Karp et al. \cite{KarpSSV2000} captured at least some of this robustness formally was by considering oblivious node failures. They showed that $F$ failures, that are independent from the randomness used by the algorithm, lead to all but $o(F)$ live nodes being informed. Avin et al. gave a similar guarantee in \cite{AvinElDISC13}. We remark here that our gossip algorithms have the same failure tolerance:

\begin{theorem}
Algorithms \clusterone, \clustertwo, and \clusterthree maintain their guarantees in terms of running time, message- and bit-complexity and bounds on communication $\Delta$ even if up to $F = O(n)$ arbitrary but obliviously chosen nodes fail in the beginning. They produce a cluster (or $\Delta$-clustering) of all but $o(F)$ surviving nodes. All nodes in this cluster learn the message to be broadcast if any of these surviving nodes got the message initially.
\end{theorem}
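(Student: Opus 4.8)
The plan is to exploit the fact that the failure set $\mathcal{F}$ with $|\mathcal{F}|=F=O(n)$ is fixed \emph{before} any of the algorithm's coin flips. I would condition on $\mathcal{F}$ and model each failed node as one that never initiates a contact, never answers a PULL, and never acknowledges a PUSH. The crucial consequence of obliviousness is that, for any surviving node, each of its random contacts lands on a failed node independently with probability exactly $F/n =: \phi$, and this event is independent of the algorithm's current state. Hence every random contact reaches a surviving node with probability $1-\phi$, and since $F=O(n)$ keeps $\phi$ bounded away from $1$, every Chernoff and balls-and-bins estimate underlying \Cref{lem:initialclusters,lem:clusteroneprog,lem:cl2samp,lem:cl2main,lem:boundedpush} degrades by at most a constant factor. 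In particular, if the adversary were adaptive it could target leaders and break everything; obliviousness is exactly what lets us treat ``a contact hits a dead node'' as a fresh, independent, constant-probability coin.

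Next I would show the push/clustering phases still produce one giant cluster whose leader is alive and which contains a constant fraction of the survivors. Call a cluster \emph{good} if its current leader is a surviving node; since leader identities are determined by the algorithm's randomness and node IDs, independently of the fixed set $\mathcal{F}$, a $(1-\phi)$-fraction of the initially sampled singletons are good, and a follower whose leader is dead detects this the first time a coordinating PULL to its leader goes unanswered, whereupon it resets $follow=\infty$ and rejoins the unclustered pool. The key point is that failures do \emph{not} compound across the $\Theta(\log\log n)$ squaring rounds: whenever a good active cluster recruits an inactive one the merged cluster retains the active cluster's living leader, so the invariant ``every surviving cluster has a living leader'' is preserved, and only a one-time constant fraction of survivors (those in initially dead clusters) is lost. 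Each squaring step still reaches only a $(1-\Theta(\phi))$ fraction of good clusters, a constant, so the induction of \Cref{lem:clusteroneprog} and \Cref{lem:cl2main} goes through with adjusted constants and all good clusters merge into one giant cluster of $\Theta(n)$ survivors.

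The heart of the argument, and the step I expect to be the main obstacle, is upgrading the cleanup guarantee from $O(F)$ to $o(F)$. Let $y_t$ denote the fraction (relative to $n$) of surviving nodes still unclustered after $t$ rounds of {\sc UnclusteredNodesPull}; such a node stays unclustered in a round only if its PULL hits a failed node or another unclustered survivor, which has probability $\phi+y_t$, yielding the recursion $y_{t+1}\le y_t(\phi+y_t)$ (valid up to Chernoff fluctuations as long as $y_t n=\omega(\log n)$). While $y_t\gtrsim\phi$ this contracts doubly-exponentially, and once $y_t\ll\phi$ it contracts geometrically with ratio $\Theta(\phi)$, so running the phase for $\Theta(\log\log n)$ rounds with a large enough constant drives $y_t$ a super-constant number of geometric steps below $\phi$, giving $y_t=o(\phi)$ and hence $y_t n=o(\phi n)=o(F)$ unclustered survivors. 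I would check that one can stop at $y_t=\epsilon\phi$ with $\epsilon=o(1)$ while still keeping $y_t n=\epsilon F=\omega(\log n)$ whenever $F=\omega(\log n)$ (so the recursion never enters the regime where concentration fails), and dispatch the remaining range $F=O(\log n)$ separately, where failures are negligible and the failure-free analysis of \Cref{lem:pull} already clusters every survivor with high probability.

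Finally, the complexity and $\Delta$ claims follow with little extra work: failed nodes only suppress communication, so growth rates drop by at most constant factors and the round count stays $\Theta(\log\log n)$, no leader ever coordinates more than its (now smaller) cluster so the $\Delta$-bound of \clusterthree\ is preserved, and the per-node message count and total bit count can only decrease, remaining $O(1)$ and $O(nb)$. The message-delivery statement then follows because the terminal giant cluster (or, via the same cleanup bound applied to the cluster-level PULL and the closing ClusterResize, the $\Delta$-clustering) has living leaders, so the concluding ClusterShare propagates the rumor from any informed surviving member to all members of the cluster. I would remark that the identical recursion applied to \clustertwo's {\sc BoundedClusterPush} and {\sc UnclusteredNodesPull}, and to \clusterthree's cleanup and final ClusterResize($\Delta/C''$), delivers the stated $\Delta$-clustering of all but $o(F)$ surviving nodes.
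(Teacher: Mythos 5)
Your proposal is correct and follows essentially the same route as the paper's proof: reduce oblivious failures to independent constant-probability ``dead contact'' events (the paper does this via node-symmetry, treating oblivious failures as random ones), observe that the Chernoff and balls-and-bins estimates in the clustering phases degrade only by constant factors because $\phi = F/n$ is bounded away from $1$, and analyze the final pull phase in two regimes --- doubly-exponential squaring while the uninformed survivors outnumber $\Theta(F)$, then geometric contraction with ratio $\Theta(F/n)$ --- so that $\Theta(\log\log n)$ rounds leave $O(F)\cdot(F/n)^{\Theta(\log\log n)} = O\bigl(F/\log^{O(1)} n\bigr) = o(F)$ survivors unclustered. Your recursion $y_{t+1}\le y_t(\phi+y_t)$ is exactly the paper's calculation made explicit, and your attention to where concentration fails (stopping at $y_t=\eps\phi$ with $\eps F=\omega(\log n)$, and dispatching $F=O(\log n)$ separately) is more careful than the paper, which glosses over the small-$F$ regime.

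Two blemishes are worth flagging. First, your ``good cluster''/dead-leader apparatus is vacuous: the adversary fails nodes at the \emph{beginning} of the execution, so a failed node never samples itself as a singleton leader, never ClusterPUSHes its ID, and never answers a PULL; consequently no surviving node ever sets its $follow$ variable to a failed node's ID, and dead leaders simply cannot arise. (Your detect-and-reset rule is moreover a modification of the algorithms, which the theorem statement does not permit --- harmlessly here, since the rule never triggers.) Second, the claim that the per-node message count ``can only decrease'' is wrong as stated: in \textsc{UnclusteredNodesPull} (and in the measured-growth push phases) a node keeps transmitting each round until it succeeds or the loop ends, so failures, by slowing progress, \emph{increase} the number of pulls. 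The repair is the paper's one-line argument: since $\phi$ is bounded away from $1$ and the giant cluster covers $\Theta(n)$ survivors, every pull succeeds with probability bounded below by a constant, so each node sends $O(1)$ messages in expectation and the total remains $O(n)$ with high probability.
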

\begin{proof}
Since all three algorithms are symmetric with respect to nodes one can think of the oblivious failures as random failures. The initialization phase of all three algorithm relies only on sampling which is not disturbed by random failures. During the cluster merging phase, there are anyway at least a constant fraction of unclustered nodes that do not take part in any actions and having $F$ extra nodes, namely the failed ones, in this group does not make a difference. Lastly, the pull phase analysis is carried in two parts. So long as at the beginning of a pull round the number $I$ of uninformed nodes is more than $F$, then with high probability there will remain $O(I^2/n)$ uninformed after the round. Hence, in this part, the uninformed fraction is squared in each round. Once all but $O(F)$ nodes are informed, at most a constant fraction of the pulls go to failed nodes which results in $\Theta\left(O(F)(F/n)^{\Theta(\log \log n)}\right) = O\left(\frac{F}{\log^{O(1)} n}\right) = o(F)$ non-failed nodes remaining uninformed after $\Theta(\log \log n)$ rounds, as claimed. Furthermore, the $O(n)$ message complexity is preserved during the pull phase. This follows from the fact that throughout the pull phase, each pulling node has an (independent) chance of at least $1/2$ to get informed. This leads to an expected constant number of messages per node with high probability to at most $O(n)$ messages over all nodes.
\end{proof}

}\fullOnly{\sectionFaultTolerance}

\section*{Acknowledgments} We thank Chen Avin and Robert Els\"asser for sharing an early version of their work and for helpful discussions.

\bibliographystyle{abbrv}
\bibliography{ref}

\shortOnly{

\appendix
\section*{APPENDICES}

\section{Proofs for \Cref{sec:simplealg}}\label{app:proofscluster1}

\proofinitialclusters

\proofclusteroneprog

\proofmergeallclusters

\proofpull

\proofclusterone

\section{Proofs of \Cref{sec:linear}}

\proofcltwosamp

\proofcltwoinit

\proofcltwomain

\proofboundedpush

\section{Proofs for \Cref{sec:LB}}

\proofknowledgegraph

\newpage

\section{Algorithm \clustertwo} \label{app:alg2}

\algorithmtwo

\newpage

\sectionBoundedCommunication

\sectionFaultTolerance

}

\end{document}